\newcommand{\fsize}{0.48}
\newcommand{\fsb}{0.43}
\newcommand{\qd}{{\bf d}}
\newcommand{\qe}{{\bf e}}
\newcommand{\qf}{{\bf f}}
\newcommand{\qg}{{\bf g}}
\newcommand{\qh}{{\bf h}}
\newcommand{\qt}{{\bf t}}
\newcommand{\qw}{{\bf w}}
\newcommand{\qA}{{\bf A}}
\newcommand{\qI}{{\bf I}}
\newcommand{\qQ}{{\bf Q}}
\newcommand{\qT}{{\bf T}}
\newcommand{\qzero}{{\bf 0}}
\newcommand{\qone}{{\bf 1}}
 \newtheorem{corollary}{Corollary}
\newcommand{\be}{\begin{equation}} \newcommand{\ee}{\end{equation}}
\newcommand{\bea}{\begin{eqnarray}} \newcommand{\eea}{\end{eqnarray}}
\newtheorem{theorem}{Theorem}
\begin{document}

\title{Exploiting Known Interference as Green Signal Power for Downlink Beamforming Optimization}
\author{Christos~Masouros,~\IEEEmembership{Senior Member,~IEEE}, Gan Zheng,~\IEEEmembership{Senior Member,~IEEE} \\
\thanks{Manuscript received May 30, 2014; revised September 19, 2014 and February 9, 2015; accepted April 18, 2015. The
associate editor coordinating the review of this manuscript and approving it for
publication was Prof. Y. W. Hong.}
\thanks{Christos~ Masouros is with Department of Electronic \& Electrical Engineering - University College London, Torrington Place, London WC1E 7JE,
UK, E-mail: {\sf chris.masouros@ieee.org}.}
\thanks{
Gan Zheng is with School of Computer Science and Electronic
Engineering, University of Essex, Colchester,  CO4 3SQ, UK, E-mail:
{\sf ganzheng@essex.ac.uk}. He is also affiliated with
Interdisciplinary Centre for Security, Reliability and Trust (SnT),
  University of Luxembourg, Luxembourg.}
\thanks{This work was supported by the Royal Academy of Engineering, UK and the Engineering and Physical Sciences Research Council (EPSRC) project EP/M014150/1.}}

\maketitle
\begin{abstract}
We propose a data-aided transmit beamforming scheme for the multi-user
multiple-input-single-output (MISO) downlink channel.  While
conventional beamforming schemes aim at the minimization of the
transmit power subject to suppressing interference to guarantee
quality of service (QoS) constraints, here we use the knowledge of
both data and channel state information (CSI) at the transmitter to
exploit, rather than suppress, constructive interference. More
specifically, we design a new precoding scheme for the MISO downlink that
minimizes the transmit power for generic phase shift keying (PSK)
modulated signals. The proposed precoder   reduces the transmit
power compared to conventional schemes, by adapting the QoS
constraints to accommodate constructive interference as a source of
useful signal power. By exploiting the power of constructively
interfering symbols, the proposed scheme achieves the required QoS
at lower transmit power. We extend this concept to the signal to
interference plus noise ratio (SINR) balancing problem, where higher
SINR values compared to the conventional SINR balancing optimization
are achieved for given transmit power budgets. In addition, we derive equivalent
virtual multicast formulations for both optimizations, both of which provide insights of the
optimal solution and facilitate the design of a more efficient
solver. Finally,
we propose a robust beamforming technique to deal with imperfect CSI,
that also reduces the transmit power over conventional techniques,
while guaranteeing the required QoS. Our simulation
and analysis show significant power savings for small scale MISO downlink channels with
the proposed data-aided optimization compared to conventional
beamforming optimization.

 \end{abstract}
\begin{keywords} Downlink beamforming, convex optimization, power minimization, SINR balancing, constructive interference
\end{keywords}

\section{Introduction}

The power efficiency of wireless transmission links has recently
attracted considerable research interest,  in-line with the global
initiative to reduce the CO$_2$ emissions and operational expenses
of communication systems. Accordingly, transmit beamforming schemes
for power minimization have become particularly relevant for the
downlink channel. Capacity achieving non-linear dirty paper coding
(DPC) techniques \cite{DPC1,DPC2} have been proposed for
pre-subtracting interference prior to transmission. The DPC methods
developed so far are in general complex as they require
sophisticated sphere-search algorithms \cite{SphereDec} to be
employed at the transmitter and assume codewords with infinite
length for the encoding of the data. Their suboptimal counterparts
\cite{THPFisher}-\cite{MIOTHP} offer a complexity reduction at a comparable
performance. Still however, the associated complexity is prohibitive
for their deployment in current communication standards. On the other hand, linear precoding schemes based on channel
inversion \cite{CI}-\cite{tvt} offer the  least complexity, but
their performance is far from achieving the optimum maximum
likelihood bound. Their non-linear adaptation, namely vector
perturbation (VP) precoding \cite{VP}-\cite{VP4} provides a performance
improvement at the expense of an increased complexity since
the search for the optimal perturbation vectors is an NP-hard
problem, typically solved by complex sphere search algorithms at the
transmitter.

More relevant to this work, optimization based techniques that
directly minimize the transmit power subject to quality of service
(QoS) constraints - most commonly to
interference-plus-noise ratio  (SINR) - have been studied for broadcast channels in
\cite{Ott}, where convex optimization problems of such nature were
proposed. Recent works have focused on the utility maximization in MIMO interfering broadcast channels \cite{Luo},\cite{Yu} and full duplex radio \cite{Ng}\color{black}. For the cases of channel state information (CSI) errors, robust versions of these optimizations have been studied in
\cite{Wu} - \cite{GanRobust}. In \cite{Pascual}, a robust
max-min approach was developed for a single-user MIMO system based
on convex optimization. Later in \cite{Wiesel}, the robust
transmission schemes to maximize the compound capacity for single
and multiuser rank-one Ricean MIMO channels were addressed, based on
the uncertainty set in \cite{Wolfowitz}. Robust beamforming for
multiuser multiple-input single-output (MISO) downlink channels with
individual QoS constraints under an imperfect channel covariance
matrix was studied in \cite{Ott}, \cite{Gersh2}. Recently in
\cite{Pascual2}, the optimal power allocation over fixed beamforming
vectors was obtained in the presence of errors in CSI matrices. Most
recently, efficient numerical solutions to find conservative robust
beamforming for multiuser MISO systems with mean-square-error (MSE)
and SINR constraints and different bounded CSI errors have been
developed in \cite{Boche2},\cite{GanRobust}. Moreover, SINR
balancing optimizations have been proposed in \cite{Boche} where the
minimum achievable SINR is maximized, subject to a total transmit
power constraint.

This paper is based on the concept of interference exploitation,
first introduced in  \cite{SCI,CR, ComMag, ComMag2} where analytical
interference classification criteria and low-complexity precoders
based on channel inversion were derived. The analysis showed how
the knowledge of both CSI and data at the base station can be used
to predict and classify interference into constructive and
destructive. The closed-form precoders in \cite{SCI,CR} showed that
by exploiting the constructive part of interference, higher receive
SNRs can be provided without increasing the per user transmit power.
However, when precoders are fully optimized, less is
understood about the performance gain of the constructive
interference approach over the conventional optimization, e.g.,
\cite{Ott}. Accordingly, in this work we aim to improve such optimization-based precoders by exploiting constructive interference as a source of
signal power. By doing so, we further reduce the transmit power
required for guaranteeing the SNR constraints in the optimization,
thus improving the power efficiency of transmission. For clarity, we
list the contributions of this paper as follows:
\begin{enumerate}
\item We introduce a new  linear precoder
design for PSK that a) reduces the transmit power for given
performance compared to existing precoders based on the proposed
constructive interference regions, and b) as opposed to conventional precoders, applies to scenarios with higher numbers of users than transmit antennas,
\item We further adapt this concept to the SINR balancing problem, where higher SINR values compared to the conventional SINR balancing optimization are achieved
for given transmit power budgets,
\item We re-cast the optimization problem as a virtual multicast  optimization problem based on which we derive the structure of the optimal solution and develop an efficient
gradient projection algorithm to solve it,
\item We use the multicast  formulation to derive a robust precoding scheme suitable for imperfect CSI with bounded CSI errors.
\end{enumerate}
We note that, while the following analysis focuses on PSK modulation, the above concept and relevant optimizations can be extended to other modulation formats such as quadrature amplitude modulation (QAM) by adapting the decision thresholds of the constellation to accommodate for constructive interference \cite{ComMag}. It should be stressed however, that the proposed schemes are most useful in high interference scenarios where low order modulation such as BPSK and QPSK are used in the communication standards to ensure reliability \cite{LTE}. In addition, constant envelope modulation such as PSK has received particular interest recently with the emergence of large scale MIMO systems \cite{Marzetta}. All the above motivate our focus on PSK constellations. Finally, we note that, in line with the relevant literature we assume a time division duplex (TDD) transmission here, where the base station directly estimates the downlink channel using uplink pilot symbols and uplink-downlink channel reciprocity.

The rest of the paper is organized as follows. Section II introduces
the system model considered in this paper and briefly describes the
two conventional  optimizations of interest: \textit{power minimization} and
\textit{SINR balancing}. In section III the proposed beamforming optimization
based on constructive interference is detailed for the case of power
minimization, while its multicasting equivalent optimization is
shown in section IV. Section V presents the constructive
interference optimization for the SINR balancing problem. The CSI
robust versions of these optimizations are examined in section VI
for both power minimization and SINR balancing, for the case of bounded
CSI errors. Finally numerical results are illustrated and discussed
in section VII and concluding remarks are given in section
VIII. In the following, we use the terms beamforming and precoding interchangeably, in-line with the most relevant literature.

\section{System Model and Beamforming Optimization}
Consider a $K-$user Gaussian broadcast channel where an $N$-antenna BS
 transmits signals to $K$ single-antenna users. Channel vector, precoding vector, received noise, data and SINR constraints for the $i-$th user are
 denoted as $\qh_i^\dag, \qt_i$,  $n_i$, $d_i= d e^{j\phi_i}$ and $\Gamma_i$ respectively with $d$ denoting the constant amplitude of the PSK modulated symbols and $n_i\sim \mathcal{CN}(0, N_0), \forall
 i$ where $\mathcal{CN}(\mu,\sigma^2)$ denotes the circularly symmetric complex Gaussian distribution with mean $\mu$ and variance $\sigma^2$.
 The received signal at user $i$ is
 \bea\label{rxs}
    y_i &=& {\qh_i^T} \sum_{k=1}^K \qt_k d_k + n_i\nonumber\\
      &=& {\qh_i^T} \sum_{k=1}^K \qt_k  e^{j(\phi_k-\phi_i)} d_i +
      n_i.
 \eea

The receive SINR at the $i$-th receiver for this scenario is given as
  \bea\label{SINR}
    \gamma_i=\frac{|\qh_i^T\qt_i|^2}{\sum_{k=1,k\ne i}|\qh_i^T\qt_k|^2 + N_0}, \forall i.
 \eea

 The transmit signal is
 \be
    \sum_{k=1}^K \qt_k d_k = \sum_{k=1}^K \qt_k e^{j(\phi_k-\phi_i)}
    d_i,
 \ee
 where any of the users' symbols $d_i= d e^{j\phi_i}$ can be taken as reference. Without loss of generality, let us use $d_1= d e^{j\phi_1}$ as reference hereof. Accordingly, the instantaneous transmit power for a signal with envelope $d=1$ is defined as
 \be
    P_T =  \left\| \sum_{k=1}^K \qt_k e^{j(\phi_k-\phi_1)}\right\|^2.
 \ee

\subsection {Power Minimization}
The conventional power minimization precoder, treating all interference as harmful, aims to minimize the average transmit power, subject to interference constraints as shown below \cite{Ott}

  \bea\label{eqn:BC}
    \min_{\{\qt_i\}} && \sum_{i=1}^K\|\qt_i\|^2 \\
    \mbox{s.t.} && \frac{|\qh_i^T\qt_i|^2}{\sum_{k=1,k\ne i}|\qh_i^T\qt_k|^2 + N_0}\ge \Gamma_i, \forall i.\nonumber
 \eea

While optimal from a stochastic viewpoint, the above precoder ignores the fact that, instantaneously, interference can contribute constructively to the received signal power \cite{CR}, and therefore from an instantaneous point of view we later show that it is suboptimal.
\begin{figure*}
    \centering
        \includegraphics[width=0.95\textwidth]{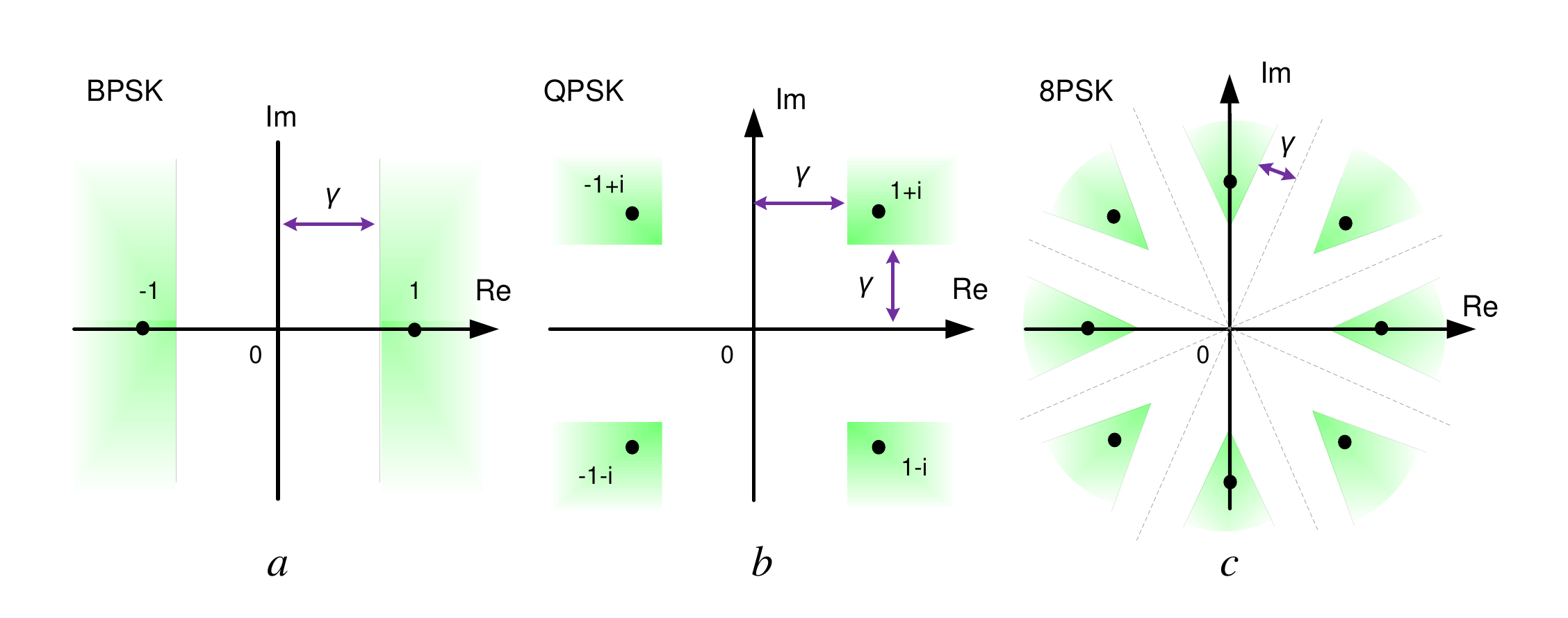}
    \caption{{Constructive interference in a) BPSK, b) QPSK and c) 8PSK}}
    \label{Crit}
\end{figure*}
\subsection{SINR Balancing}
As regards the second optimization that is of interest in this work, SINR balancing maximizes the minimum achievable SINR subject to a transmit power budget, in the form

 \bea\label{eqn:SB2}
    \max_{\qt_i} && \Gamma_t \nonumber\\
    \mbox{s.t.} && \frac{\|\qh_i^T\qt_i\|^2}{\sum_{k=1,k\ne i}\|\qh_i^T\qt_k\|^2 + N_0}\ge \Gamma_t, \forall i.\nonumber\\
    &&\sum_{i=1}^K\|\qt_i\|^2 \leq P
 \eea
where $P$ denotes the total transmit power budget. We note that the above optimization is non-convex and the solution involves more complex iterative solutions \cite{Boche}.

\section{Proposed Optimization based on Constructive Interference and Phase Constraints}

 With the knowledge of the downlink channel and all user's data readily available at the transmitter, and with the aim of exploiting the instantaneous interference, the interference for PSK modulation can be classified to constructive and destructive based on known criteria \cite{CR, ComMag}. For clarity, these are summarized schematically in Fig. \ref{Crit} for the basic PSK constellations. Here the scalar $\gamma$ denotes the threshold distance to the decision variables of the constellation, that relates to the SNR constraint, as detailed in the following. In brief, constructive interference is defined as the interference that moves the received symbols away from the decision thresholds of the constellation. This represents the green areas in the constellations of
 Fig. \ref{Crit} where these are taken with reference to a minimum distance from the decision thresholds as per the SNR constraints. Note that these need not center on the nominal constellation points (the black dots in the figure) for generic SNR constraints. We refer the reader to \cite{SCI, CR, ComMag} for further details on this topic.

 As per the above classification and discussion, the optimizations in \eqref{eqn:BC},\eqref{eqn:SB2} can be modified to take the constructive interference into account in the power minimization. This can be done by imposing interference constraints, not in terms of suppressing the stochastic interference, but rather optimizing instantaneous interference to contribute to the received signal power, thus reducing the required transmit power accordingly. For the case when interference has been aligned by means of precoding vectors ${\qt}_k$ to overlap constructively with the signal of interest, all interference in \eqref{rxs} contributes constructively to the useful signal. Therefore all interference terms form components of the useful signal energy, which is given by the squared magnitude of the full sum in \eqref{rxs}. Accordingly, it has been shown in \cite{CR} that the receive SNR is given as 
 \be\label{CISNR}
\gamma_i= \frac{\left| {\qh_i^T} \sum_{k=1}^K {\qt}_k  d_k\right|^2}{N_0}
 \ee

 \begin{figure*}
    \centering
        \includegraphics[width=\textwidth]{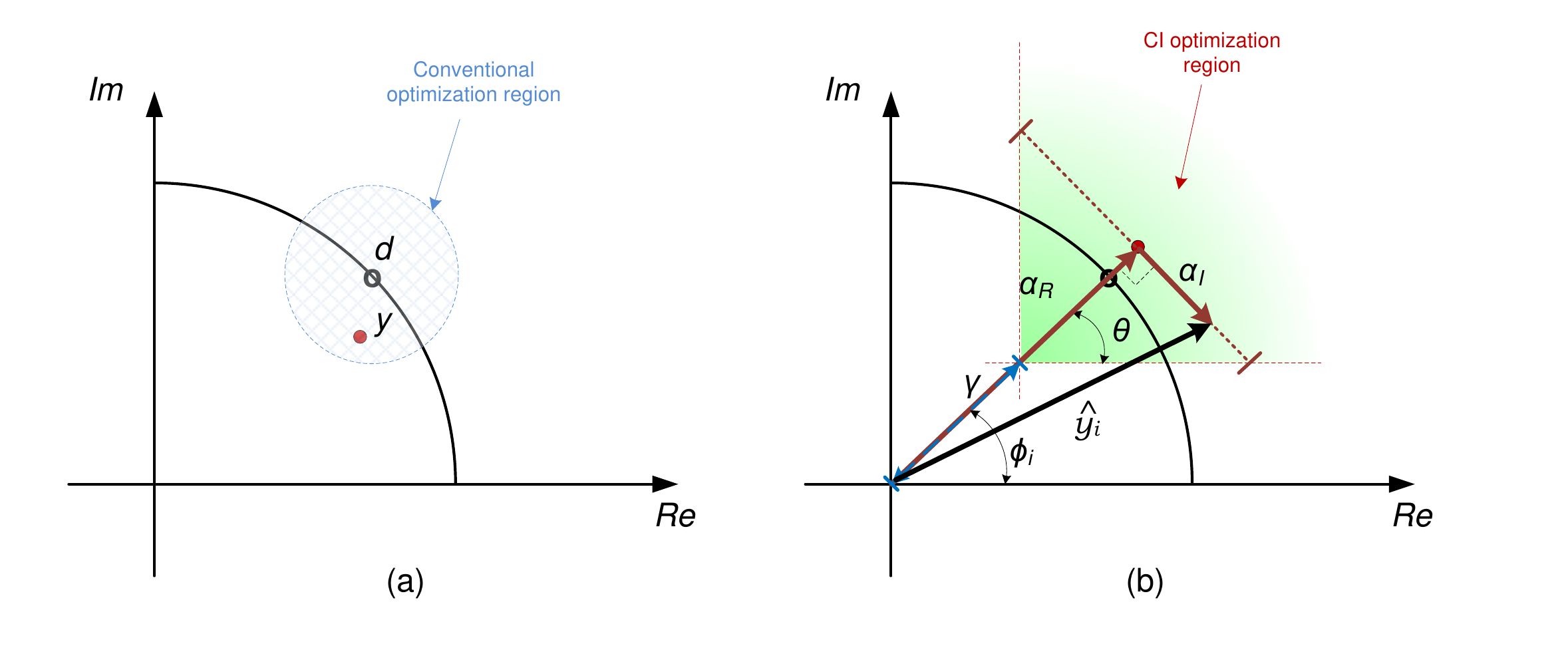}
    \caption{{Optimization regions for (a) conventional precoding and (b) precoding for interference exploitation and generic optimization constraints (QPSK example)}.}
    \label{Constr}
\end{figure*}

where all interference contributes in the useful received signal power. Accordingly, and based on the classification criteria detailed in \cite{SCI} and Fig. \ref{Crit} for $M$-PSK based on constructive interference, where the received interference is aligned to the symbols of interest, the problem can be reformulated as
  \bea\label{CIBCstrict}
    \min_{\{\qt_i\}} &&  \left\| \sum_{k=1}^K \qt_k e^{j(\phi_k-\phi_1)}\right\|^2 \nonumber\\
    \mbox{s.t.}
    &&\angle \left({\qh_i^T} \sum_{k=1}^K \qt_k  d_k\right)=\angle(d_i), \forall i\nonumber\\
     &&\mbox{Re} \left({\qh_i^T} \sum_{k=1}^K \qt_k  e^{j(\phi_k-\phi_i)}\right)\ge
     \sqrt{\Gamma_iN_0}, \forall i,
\eea
where $\mbox{Re}(x)$, $\mbox{Im}(x)$, $sign(x)$ and $\angle x$ denote the real and imaginary part, the sign and angle \color{black} of $x$ respectively. Here the angle of interference is strictly constrained to equal the angle of the symbol of interest. The problem can be equivalently formulated as
   \bea\label{CIBCstrict2}
    \min_{\{\qt_i\}} &&  \left\| \sum_{k=1}^K \qt_k e^{j(\phi_k-\phi_1)}\right\|^2 \nonumber\\
    \mbox{s.t.}
         &&\mbox{Im} \left({\qh_i^T} \sum_{k=1}^K \qt_k  e^{j(\phi_k-\phi_i)}\right)=0, \forall i\nonumber\\
    &&\mbox{Re} \left({\qh_i^T} \sum_{k=1}^K \qt_k  e^{j(\phi_k-\phi_i)}\right)\ge
     \sqrt{\Gamma_iN_0}, \forall i.
 \eea

We note the use of the sum of phase shifted (by the phase of the symbol of interest $\phi_i$) interfering symbols plus the symbol of interest in the above expressions. This serves to isolate the received amplitude and phase shift in the symbol of interest due to interference. Here, the first constraint requires that the interference perfectly
aligns with the phase of the symbol of interest, to ensure that it
overlaps constructively to the useful symbol \cite{CR, ComMag}. The
second constraint requires that this constructive interference is
enough to satisfy the receive SNR threshold. Note that the above two
conditions contain $K$ equations and $K$ inequalities, while there
are $2N$ real variables. Therefore, for the case $N \geq K$ there are always sufficient degrees of
freedom to satisfy these two sets of constraints, while our results in the following show that the proposed can be feasible with high probability even for cases with $N < K$. \color{black} Still, it can be seen that
due to the strict angle constraint, the formulation
\eqref{CIBCstrict2} is more constrained than the constructive interference regions in Fig. \ref{Crit} where the strict angle constraints do not exist. To
obtain a more relaxed optimization for $M-$PSK, let us observe the
constellation example shown in the diagram of Fig. \ref{Constr} for
QPSK. Here, we have used the definitions $\alpha_R=\mbox{Re}
\left({\qh_i^T} \sum_{k=1}^K \qt_k  e^{j(\phi_k-\phi_i)}\right)$ and
$\alpha_I=\mbox{Im} \left({\qh_i^T} \sum_{k=1}^K \qt_k
e^{j(\phi_k-\phi_i)}\right)$ which are the real and imaginary components of $\hat{y}_i\triangleq{\qh_i^T} \sum_{k=1}^K \qt_k  e^{j(\phi_k-\phi_i)}$ in the figure, the received symbol ignoring noise, phase shifted by the phase of the desired symbol. We also define $\tilde{\gamma}=\sqrt{\Gamma_iN_0}$. By means of their definition, $\alpha_R$ and $\alpha_I$ essentially shift the observation of received symbol onto the axis from the origin of the constellation diagram to the constellation symbol of interest. Clearly, $\alpha_R$ provides a measure of the amplification of the received constellation point along the axis of the nominal constellation point due to constructive interference and $\alpha_I$
provides a measure of the angle shift from the original constellation point, i.e. the deviation from the axis of the nominal constellation point with phase $\phi_i$. 

At this point, we should emphasize the key idea in the proposed optimization which is the central strength of the proposed scheme that relaxes the optimization and allows additional reduction in the transmit power. In conventional optimizations, the signal power is optimized subject to SINR constraints. This is equivalent to constraining the interference each user experiences, so that the received symbol is within a certain distance from the nominal constellation symbol. From the view point of multiple users this essentially constrains the transmit vectors such that the received symbol $y$ is contained within a circle (or a hyper-sphere for multidimensional optimizations) around the nominal constellation point $d$, so that the interference caused by the other symbols is limited. This is denoted by the dashed circle around the constellation point in Fig. \ref{Constr}(a). In contrast to this, here by use of the concept of constructive interference we allow a relaxation of $\alpha_R$ and $\alpha_I$ for all transmit symbols, under the condition that the interference caused is constructive, as secured by the constraints of the optimization. This gives rise to the constructive interference sector denoted by the green shaded sector in Fig. \ref{Constr}(b) \cite{CR},\cite{ComMag}. It can be seen that $\alpha_R$ and $\alpha_I$ are allowed to grow infinitely, as long as their ratio is such that the received symbol is contained within the constructive area of the constellation, i.e. the distances from the decision thresholds, as set by the SNR constraints $\Gamma_i$, are not violated. It is clear that the region in Fig. \ref{Constr}(b) is only constrained along the vicinity of the decision thresholds, it therefore extends infinitely to the directions away from the decision thresholds and hence provides a more relaxed optimisation with respect to the conventional region of Fig. \ref{Constr}(a). \color{black}

\begin{table*}\normalsize
    \bea\label{CIBCrelc}
    \min_{\{\qt_i\}} &&  \left\| \sum_{k=1}^K \qt_k e^{j(\phi_k-\phi_1)}\right\|^2  \nonumber\\
    \mbox{s.t.}
    &&\mbox{Re} \left({\qh_i^T} \sum_{k=1}^K \qt_k  e^{j(\phi_k-\phi_i)}\right)\ge
     \sqrt{\Gamma_iN_0}, \forall i\nonumber\\
&&\left|\mbox{Im} \left({\qh_i^T} \sum_{k=1}^K \qt_k  e^{j(\phi_k-\phi_i)}\right)\right|\le\left(\mbox{Re} \left({\qh_i^T} \sum_{k=1}^K \qt_k  e^{j(\phi_k-\phi_i)}\right)- \sqrt{\Gamma_iN_0}\right)\tan\theta, \forall i \eea
\bea\label{CIBCrelc3}
    \min_{\{\qt_i\}} &&  \left\| \sum_{k=1}^K \qt_k e^{j(\phi_k-\phi_1)}\right\|^2  \nonumber\\
    \mbox{s.t.}
&&\left|\mbox{Im} \left({\qh_i^T} \sum_{k=1}^K \qt_k  e^{j(\phi_k-\phi_i)}\right)\right|\le\left(\mbox{Re} \left({\qh_i^T} \sum_{k=1}^K \qt_k  e^{j(\phi_k-\phi_i)}\right)- \sqrt{\Gamma_iN_0}\right)\tan\theta, \forall i \eea

\end{table*}

As regards the constructive area in the constellation, with respect to \eqref{CIBCstrict2}, it can be seen that the angle of
interference need not strictly align with the angle of the useful
signal, as long as it falls within the constructive area of the
constellation. For a given modulation order $M$ the maximum angle
shift in the constructive interference area is given by
$\theta=\pm \pi/M$. Accordingly, to relax the optimization, $\alpha_I$
is allowed to be non-zero as long as the resulting symbol lies
within the constructive area of the constellation. Using basic
geometry we arrive at the optimization problem expressed as in eq. \eqref{CIBCrelc}.

By noting that the last constraint actually incorporates the one for
the real part of the received symbols, the optimization can be
further reduced to the compact form of eq. \eqref{CIBCrelc3}.

 Clearly, the above optimization is equivalent to
  \bea\label{CIBCrel2}
    \min_{\{\qt_i\}} &&  \left\| \sum_{k=1}^K \qt_k e^{j(\phi_k-\phi_1)}\right\|^2 \nonumber\\
    \mbox{s.t.}
    &&\left|\angle \left({\qh_i^T} \sum_{k=1}^K \qt_k  d_k\right)-\angle(d_i)\right|\leq\vartheta , \forall i\nonumber\\
     &&\mbox{Re} \left({\qh_i^T} \sum_{k=1}^K \qt_k  e^{j(\phi_k-\phi_i)}\right)\ge
     \sqrt{\Gamma_iN_0}, \forall i.
\eea
 where for $\vartheta$ we have
  \bea\label{tanphi}
\tan\vartheta=\tan\theta\left(1-\frac{\sqrt{\Gamma_iN_0}}{\mbox{Re} \left({\qh_i^T} \sum_{k=1}^K \qt_k  e^{j(\phi_k-\phi_i)}\right)}\right)
\eea
It can be seen that the above optimization in \eqref{CIBCrelc3} is more relaxed than the zero-angle-shift optimization \eqref{CIBCstrict}, which results in a smaller minimum in the transmit power. Moreover, it contains a number of $K$ inequalities which result in an increased feasibility region compared to the conventional optimization, as will be shown in the following. Problem (\ref{CIBCrelc3}) is a standard
second-order cone program (SOCP), thus can be optimally
solved using numerical software, such as Semudi. However, in the
following section, we derive a more computationally efficient algorithm to solve it. For the illustrative example of BPSK used in the following results, the optimization can be modified to

 \bea\label{CIBC2}
    \min_{\{\qt_i\}} && \left\| \sum_{k=1}^K \qt_k e^{j(\phi_k-\phi_1)}\right\|^2  \\
    \mbox{s.t.}
    &&\mbox{Re} \left({\qh_i^T} \sum_{k=1}^K \qt_k  d_k\right)sign(d_i)\ge
     \sqrt{\Gamma_iN_0}, \forall i,\nonumber
 \eea

 In \eqref{CIBC2} the constraint requires that the interference on each user's symbol has the same sign as the symbol of interest (and is therefore constructive) and that this constructive interference has enough power to satisfy the SNR threshold. For the case of QPSK the same principle needs to be applied separately to the real and imaginary part of the received symbol (see Fig. \ref{Crit}(b)) and hence we have
  \bea\label{CIBC3}
    \min_{\{\qt_i\}} &&  \left\| \sum_{k=1}^K \qt_k e^{j(\phi_k-\phi_1)}\right\|^2 \\
    \mbox{s.t.}
    &&\mbox{Re} \left({\qh_i^T} \sum_{k=1}^K \qt_k  d_k\right)sign(\mbox{Re}(d_i))\ge
     \sqrt{\frac{\Gamma_iN_0}{2}}, \forall i.\nonumber\\
    &&\mbox{Im} \left({\qh_i^T} \sum_{k=1}^K \qt_k  d_k\right)sign(\mbox{Im}(d_i))\ge
     \sqrt{\frac{\Gamma_iN_0}{2}} \forall i.\nonumber
 \eea

\section{A Virtual Multicast Formulation and a New Efficient Algorithm }

\subsection{A Virtual Multicast Formulation of \eqref{CIBCrelc3}}
 The optimization in \eqref{CIBCrelc3} can be re-cast as a virtual multicasting problem \cite{Sidir} according to the following theorem.

\begin{theorem}\label{T1}
 The  broadcast problem (\ref{CIBCrelc3}) with constructive interference is equivalent to the multicast problem
 below
 \bea\label{eqn:MC}
    \min_{\qw} && \|\qw\|^2  \\ \nonumber
    \mbox{s.t.} && \left|\mbox{Im} \left({\tilde\qh_i^T}   \qw \right)\right| \le \left(\mbox{Re} \left({\tilde\qh_i^T} \qw\right)   - \sqrt{\Gamma_iN_0} \right)\tan\theta, \forall i,
    \label{eqn:constraint}
 \eea
where the modified channel is defined as $\tilde \qh_i \triangleq
\qh_i e^{j(\phi_1-\phi_i)}$. To be more specific, the optimal
solutions to (\ref{CIBCrelc3}) and (\ref{eqn:MC}), $\{\qt^*_i\}$
and $\qw^*$ respectively, have the following
relation
 \bea
  \qt_1^* &= &\frac{\qw^*}{K}, \\
  \qt_k &=& \qt_1^*e^{j(\phi_1-\phi_k)} =\frac{\qw^*e^{j(\phi_1-\phi_k)} }{K}, k = 2,
  \cdots, K.
\eea
\end{theorem}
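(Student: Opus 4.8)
The plan is to recognise that the broadcast problem \eqref{CIBCrelc3} depends on the $K$ precoding vectors $\{\qt_k\}$ only through the single aggregate vector
\[
\qw \triangleq \sum_{k=1}^K \qt_k e^{j(\phi_k - \phi_1)},
\]
which is exactly the vector whose squared norm constitutes the objective. Substituting this definition, the objective of \eqref{CIBCrelc3} becomes literally $\|\qw\|^2$, matching the objective of the multicast problem \eqref{eqn:MC}.

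The key step is to rewrite the constraints of \eqref{CIBCrelc3} in terms of $\qw$. Using the factorisation $e^{j(\phi_k - \phi_i)} = e^{j(\phi_k - \phi_1)}e^{j(\phi_1 - \phi_i)}$ and pulling the $k$-independent scalar $e^{j(\phi_1 - \phi_i)}$ out of the sum gives
\[
\qh_i^T \sum_{k=1}^K \qt_k e^{j(\phi_k - \phi_i)} = \left(\qh_i e^{j(\phi_1 - \phi_i)}\right)^T \sum_{k=1}^K \qt_k e^{j(\phi_k - \phi_1)} = \tilde\qh_i^T \qw,
\]
precisely with the modified channel $\tilde\qh_i = \qh_i e^{j(\phi_1 - \phi_i)}$. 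Inserting this into both the $\mbox{Re}(\cdot)$ and $\mbox{Im}(\cdot)$ terms turns the $i$-th constraint of \eqref{CIBCrelc3} verbatim into the $i$-th constraint of \eqref{eqn:MC}. Thus, viewed as functions of $\qw$, the two problems share an identical objective and an identical feasible set.

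It then remains to convert this functional coincidence into equivalence of the optimisation problems and to verify the recovery formula. In the forward direction, every feasible $\{\qt_k\}$ for \eqref{CIBCrelc3} induces a feasible $\qw$ for \eqref{eqn:MC} with the same objective, so the optimal value of \eqref{eqn:MC} does not exceed that of \eqref{CIBCrelc3}. For the reverse direction I would observe that the map $\{\qt_k\}\mapsto\qw$ is surjective onto $\mathbb{C}^N$ and exhibit the explicit preimage $\qt_k = \frac{\qw^*}{K} e^{j(\phi_1 - \phi_k)}$; a one-line check, $\sum_{k=1}^K \qt_k e^{j(\phi_k-\phi_1)} = \sum_{k=1}^K \frac{\qw^*}{K} = \qw^*$, shows this choice reproduces $\qw^*$, is therefore feasible for \eqref{CIBCrelc3}, and attains the objective $\|\qw^*\|^2$. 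The two inequalities together give equality of the optimal values and certify that the stated formula yields an optimal $\{\qt_k^*\}$ with $\qt_1^* = \qw^*/K$.

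There is essentially no hard estimation here; the crux of the theorem is simply spotting that, after the phase factorisation, the objective and all $K$ constraints collapse onto the single vector $\qw$, so that the $K$ broadcast beamformers carry no more usable degrees of freedom than one multicast beamformer. The only point requiring care is the reverse direction: since $\{\qt_k\}\mapsto\qw$ is many-to-one, I must confirm that the specific symmetric preimage in the recovery formula is genuinely feasible --- which it is, because feasibility depends on $\{\qt_k\}$ only through $\qw=\qw^*$ --- rather than merely invoking the existence of some preimage.
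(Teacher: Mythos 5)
Your proof is correct and follows essentially the same route as the paper's: factor out $e^{j(\phi_1-\phi_i)}$ so that every constraint and the objective depend on $\{\qt_k\}$ only through the aggregate $\qw=\sum_{k}\qt_k e^{j(\phi_k-\phi_1)}$, then recover an optimal broadcast solution via the rotated symmetric preimage $\qt_k=\frac{\qw^*}{K}e^{j(\phi_1-\phi_k)}$. If anything, your explicit two-way inequality argument (feasibility in each direction with equal objective values) is more carefully spelled out than the paper's terse "the reformulation follows immediately."
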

\begin{proof} First we re-write the constraint in \eqref{CIBCrelc3} as \eqref{CIBCrelc4}.
\begin{table*}\normalsize
 \be\label{CIBCrelc4}\left|\mbox{Im} \left({\qh_i^T} e^{j(\phi_1-\phi_i)} \sum_{k=1}^K \qt_k
e^{j(\phi_k-\phi_1)}\right)\right|\le\left(\mbox{Re}
\left({\qh_i^T}e^{j(\phi_1-\phi_i)} \sum_{k=1}^K \qt_k
e^{j(\phi_k-\phi_1)}\right)- \sqrt{\Gamma_iN_0}\right)\tan\theta,
\forall i.\ee
\end{table*}

 Observe that with \eqref{CIBCrelc4},  the composite precoding term $\sum_{k=1}^K \qt_k
 e^{j(\phi_k-\phi_1)}$ in \eqref{CIBCrelc3} can be treated as a single vector $\qw$ precoder, i.e., $\sum_{k=1}^K \qt_k
 e^{j(\phi_k-\phi_1)}=  \qw$. Therefore the multicast reformulation
 \eqref{eqn:MC} follows immediately.

Suppose  the optimal solutions of user 1's precoding vector in
 (\ref{CIBCrelc3}) is $\qt_1^*$. {Without loss of optimality, the other users' precoding vectors
 can be simple rotated versions of $\qt_1^*$, i.e., $\qt_i^*=\qt_1 e^{j(\phi_1-\phi_i)}, i = 2,
  \cdots, K.$ As a special case, we  have  $\frac{\qw^*}{K} =
  \qt_1^*$. This completes the proof.}
\end{proof}
{ Note that different from the classical multicast
beamforming design, which is non-convex and difficult to solve
\cite{Sidir},} (\ref{eqn:MC}) is a convex problem with a quadratic
objective function and $2K$ linear constraints thus easily solvable.
 In a similar fashion, the  problem \eqref{CIBCrel2} will have a
 similar multicast formulation.

Theorem 1 provides useful insight into the structure of the
precoding vectors. It tells us that the original broadcast problem
now becomes a multicast problem if interference is utilized
constructively. This is understandable because we take into account
the correlation of the originally independent data streams,
therefore the transmission can be re-formulated as sending a common data stream to all users,
and re-shaping the channel and the resulting signal overlap, so that the intended data is delivered to each receiver. More importantly, the multicast
problem contains only a single  vector $\qw$ and is therefore more easily
solved than the original broadcast problem. The rest of this section
is devoted to deriving an efficient algorithm to solve
(\ref{eqn:MC}).

\subsection{Real Representation of the Problem}
For  convenience, we separate the real and imaginary parts of each
complex notation as follows
 \be\label{eqn:real:imag}
    \tilde \qh_i = {\tilde{\qh_R} }_i + j {\tilde{\qh_I} }_i , \qw = \qw_R +j\qw_I,
 \ee
 where ${\tilde{\qh_R} }_i= \mbox{Re}(\tilde \qh_i), {\tilde{\qh_I} }_i= \mbox{Im}(\tilde \qh_i), \qw_R = \mbox{Re}(\qw), \qw_I = \mbox{Im}(\qw), j=\sqrt{-1}$.

 Further define real-valued vectors
 \be\label{eqn:real:vector}
 \qf_i=[\tilde {\qh_R}_i; \tilde{\qh_I}_i], \qw_1 \triangleq [\qw_I; \qw_R], \qw_2
= [\qw_R; -\qw_I]\ee

It is easy to verify that $\qw_1 = \Pi\qw_2$, where $\Pi \triangleq
[\qzero_{N} ~~ \qI_N ; -\qI_N ~~\qzero_N]$.  $\qzero_{N}$ and
$\qI_N$ denote $N\times N$ all-zero matrix and identity matrix,
respectively.
 With the new notations, we can  express the real and imaginary
 parts in (\ref{eqn:constraint}) as follows
 \bea
    \mbox{Re} ({\tilde\qh_i^T} \qw) = \qf_i^T \qw_2, \mbox{Im} ({\tilde\qh_i^T} \qw)   =\qf_i^T \qw_1= \qf_i^T
    \Pi\qw_2.
 \eea

 As a consequence,  the constraint in (\ref{eqn:constraint}) can be rewritten as
 \be
   | \qf_i^T \Pi\qw_2|  \le  \qf_i^T \qw_2\tan\theta
   -\sqrt{\Gamma_i
    N_0}\tan\theta, \forall i. \ee

 Define $ \qg_i\triangleq\Pi^T\qf_i$, and we rewrite  (\ref{eqn:MC}) as
  \bea\label{eqn:CIBC:robust4}
    \min_{\{\qw_2\}} &&  \|\qw_2\|^2  \nonumber\\
    \mbox{s.t.}&&    \qg_i^T  \qw_2   -  \qf_i^T \qw_2\tan\theta  +\sqrt{\Gamma_i
    N_0}\tan\theta\le 0, \forall i, \label{eqn:CIBC:robust4:con1} \\
    &&-\qg_i^T  \qw_2   -  \qf_i^T \qw_2\tan\theta  +\sqrt{\Gamma_i
    N_0}\tan\theta\le 0, \forall i.\label{eqn:CIBC:robust4:con2}
 \eea

\begin{table*}\normalsize
\bea\label{eqn:dual}  {\cal
L}(\qw_2,\bm\mu, \bm \nu)&=& \|\qw_2\|^2+ \sum_{i=1}^K \mu_i
 (\qg_i^T  \qw_2   -  \qf_i^T \qw_2\tan\theta  -\sqrt{\Gamma_i
    N_0}\tan\theta) \\
&&+ \sum_{i=1}^K \nu_i\left( -\qg_i^T  \qw_2   -  \qf_i^T
\qw_2\tan\theta -\sqrt{\Gamma_i
    N_0}\tan\theta\right)\nonumber\\
&=& \|\qw_2\|^2
 + \left(\sum_{i=1}^K  ( \mu_i -\nu_i ) \qg_i^T - \sum_{i=1}^K  (
\mu_i +\nu_i ) \qf_i^T\tan\theta\right)  \qw_2 + \sqrt{\Gamma_i
    N_0}\tan\theta \sum_{i=1}^K (\mu_i+\nu_i)\nonumber\\
    &=& \|\qw_2\|^2
 +  \sum_{i=1}^K \mu_i\left((\qg_i^T-\qf_i^T\tan\theta  )\qw_2+ \sqrt{\Gamma_i
    N_0}\tan\theta \right)  + \sum_{i=1}^K\nu_i \left(-(\qg_i^T+\qf_i^T\tan\theta)\qw_2+ \sqrt{\Gamma_i
    N_0}\tan\theta \right). \nonumber\eea 
    \end{table*}

\subsection{The Dual Problem}
 To further simplify the optimization let us formulate
its dual problem. We let $\bm\mu, \bm\nu \ge\qzero$  be the dual
vector variables associated with the two sets of constraints in
(\ref{eqn:CIBC:robust4:con1}) and (\ref{eqn:CIBC:robust4:con2})
respectively, and consider the Lagrangian of
(\ref{eqn:CIBC:robust4}) as \eqref{eqn:dual}

    The dual objective is
thus given by $\min_{\qw_2}{\cal L}(\qw_2,\bm\mu, \bm \nu)$.
 Setting $\frac{\partial \cal L(\qw_2,\bm\mu, \bm \nu)}{\partial \qw_2}=\qzero$, we obtain
 the structure of the optimal $\qw_2$ below:
  \be\label{eqn:opt:w2}
     \qw_2^*=\frac{\sum_{i=1}^K  ( \mu_i -\nu_i ) \qg_i^T - \sum_{i=1}^K  (
\mu_i +\nu_i) \qf_i^T}{2}. \ee

It is not surprising to see that the optimal $\bar\qw_2$ is the
linear combination of the channel coefficients.

Substituting  (\ref{eqn:opt:w2}) into ${\cal L}(\qw_2,\bm\mu, \bm
\nu)$ leads to 

\bea
{\cal L}(\bm\mu, \bm \nu) &=& -\frac{\|\sum_{i=1}^K
( \mu_i -\nu_i ) \qg_i^T - \sum_{i=1}^K  ( \mu_i +\nu_i)
\qf_i^T\|^2}{4} \nonumber\\
&&+ \sqrt{\Gamma_i
    N_0}\tan\theta \sum_{i=1}^K (\mu_i+\nu_i).\eea

    Define $\bm\lambda=[\bm\mu; \bm\nu]$ and rearrange the terms, and we can obtain the following dual
    problem:
 \bea\label{eqn:MC:dual}
    \max_{\bm\lambda\ge \qzero} && -\frac{\|\qA\bm\lambda \|^2}{4} + \sqrt{\Gamma_i
    N_0} \qone^T \bm\lambda,
 \eea
 where $\qA\triangleq[\qf-\qg ~\qf+\qg]$ is an ${2N\times 2K}$ real matrix, and  the $i$-th column of f and g are defined in (21) and after (23).\color{black}

 The problem (\ref{eqn:MC:dual}) is a non-negative least-squares
 problem. It has wide applications but is known to be a challenging problem \cite{nnls}. Without the non-negative constraint, its solution
 is given by
 \be\label{eqn:lambda}
    \bm\lambda^* = 2\sqrt{\Gamma  N_0}(\qA^T\qA)^{-1}\qone.
 \ee

 Based on the above results, we have the
 following corollary to provide useful insight. 
\begin{corollary}
 If $\bm\lambda^*$ is a positive vector \color{black}, the optimal solution to
 the original problem  (\ref{CIBCrelc3}) with constructive
 interference is achieved when $\mbox{Im} \left({\qh_i^T} \sum_{k=1}^K \qt_k  e^{j(\phi_k-\phi_i)}\right)=0, \forall
 i$. The optimal precoding solution can be obtained from (\ref{eqn:lambda})
 and (\ref{eqn:opt:w2}).
\end{corollary}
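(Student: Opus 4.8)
The plan is to establish the result through convex duality and complementary slackness applied to the convex program (\ref{eqn:CIBC:robust4}), whose dual is the non-negative least-squares problem (\ref{eqn:MC:dual}). Because (\ref{eqn:CIBC:robust4}) has a strictly convex quadratic objective and only affine inequality constraints, strong duality holds whenever the problem is feasible, so the KKT conditions are necessary and sufficient to characterize a primal--dual optimal pair $(\qw_2^*, \bm\lambda^*)$. The whole argument hinges on showing that the hypothesis $\bm\lambda^* > \qzero$ turns the two per-user constraints into active equalities, whose difference annihilates the imaginary part.

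First I would verify that the vector $\bm\lambda^*$ of (\ref{eqn:lambda}) actually solves the constrained dual (\ref{eqn:MC:dual}) under the hypothesis. The dual objective $-\frac{1}{4}\|\qA\bm\lambda\|^2 + \sqrt{\Gamma_i N_0}\,\qone^T\bm\lambda$ is concave, and (\ref{eqn:lambda}) is precisely its unconstrained stationary point. Hence, if this stationary point happens to satisfy $\bm\lambda^* > \qzero$, it is feasible for the non-negativity constraint $\bm\lambda \ge \qzero$ and, being the global maximizer of a concave function, is automatically optimal for (\ref{eqn:MC:dual}) as well, with all non-negativity constraints inactive.

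Next I would extract the structure of the primal optimum by complementary slackness. Writing $\bm\lambda = [\bm\mu; \bm\nu]$, the hypothesis $\bm\lambda^* > \qzero$ means $\mu_i > 0$ and $\nu_i > 0$ for every $i$, which forces both inequalities (\ref{eqn:CIBC:robust4:con1}) and (\ref{eqn:CIBC:robust4:con2}) to hold with equality at $\qw_2^*$. Subtracting the two active equalities cancels the $\qf_i^T\qw_2\tan\theta$ and $\sqrt{\Gamma_i N_0}\tan\theta$ terms and leaves $\qg_i^T\qw_2^* = 0$ for all $i$. Recalling $\mbox{Im}(\tilde\qh_i^T\qw) = \qf_i^T\Pi\qw_2 = \qg_i^T\qw_2$, this is exactly $\mbox{Im}(\tilde\qh_i^T\qw^*) = 0$. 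Translating back via $\tilde\qh_i = \qh_i e^{j(\phi_1-\phi_i)}$ and Theorem 1, which identifies $\qw = \sum_{k=1}^K \qt_k e^{j(\phi_k-\phi_1)}$, gives $\tilde\qh_i^T\qw = \qh_i^T\sum_{k=1}^K \qt_k e^{j(\phi_k-\phi_i)}$, so the condition becomes precisely $\mbox{Im}\left(\qh_i^T\sum_{k=1}^K \qt_k e^{j(\phi_k-\phi_i)}\right)=0$, $\forall i$, with the precoders recovered from (\ref{eqn:lambda}) and (\ref{eqn:opt:w2}) together with the rotation relations of Theorem 1.

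The main obstacle I anticipate lies not in the algebra but in the justification step: I must confirm that strong duality genuinely holds so that KKT applies, and, more importantly, that the two per-user constraints can be simultaneously active without contradiction. Adding the two active equalities shows that this requires $\mbox{Re}(\tilde\qh_i^T\qw^*) = \sqrt{\Gamma_i N_0}$ (using $\tan\theta \neq 0$), so the active set reduces to a single consistent point on the axis of the constellation; this confirms both that the hypothesis $\bm\lambda^* > \qzero$ is not vacuous and that the zero-imaginary-part solution of (\ref{CIBCstrict2}) is recovered as the optimum of (\ref{CIBCrelc3}).
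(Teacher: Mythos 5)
Your proof is correct and follows essentially the same route as the paper: complementary slackness for the primal--dual pair of (\ref{eqn:CIBC:robust4}) and (\ref{eqn:MC:dual}), where $\mu_i>0,\nu_i>0$ forces both per-user constraints to be active, and subtracting the two equalities yields $\qg_i^T\qw_2^*=0$, i.e.\ $\mbox{Im}\left(\qh_i^T\sum_{k}\qt_k e^{j(\phi_k-\phi_i)}\right)=0$. Your additional steps --- verifying that the unconstrained solution (\ref{eqn:lambda}) is indeed the optimum of the constrained dual when it is positive, invoking strong duality for the affine-constrained convex QP, and the consistency check $\mbox{Re}(\tilde\qh_i^T\qw^*)=\sqrt{\Gamma_i N_0}$ obtained by adding the two active equalities --- only make explicit what the paper leaves implicit.
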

This is can be explained by the fact that $\bm\lambda$ is the dual
variable and needs to satisfy the complementary slackness conditions
in (\ref{eqn:CIBC:robust4}): \bea
    &&\mu_i(\qg_i^T  \qw_2   -  \qf_i^T \qw_2\tan\theta
    +\sqrt{\Gamma_i
    N_0}\tan\theta)= 0, \forall i,\\
    &&\nu_i(-\qg_i^T  \qw_2   -  \qf_i^T \qw_2\tan\theta  +\sqrt{\Gamma_i
    N_0}\tan\theta)= 0, \forall i.
 \eea

 When $\mu_i>0, \nu_i>0$, this implies that
 \bea
    &&\qg_i^T  \qw_2   -  \qf_i^T \qw_2\tan\theta  +\sqrt{\Gamma_i
    N_0}\tan\theta =\nonumber\\
    && -\qg_i^T  \qw_2   -  \qf_i^T \qw_2\tan\theta  +\sqrt{\Gamma_i
    N_0}\tan\theta=0,
 \eea
 and it follows that $ \qg_i^T  \qw_2=0$ or $\mbox{Im} \left({\qh_i^T} \sum_{k=1}^K \qt_k  e^{j(\phi_k-\phi_i)}\right)=0, \forall
 i$.

\subsection{A Gradient Projection Algorithm to Solve (\ref{eqn:MC:dual}) }

The general solution to (\ref{eqn:MC:dual})  is difficult to derive.
Here we propose a  gradient projection algorithm to solve it. The
gradient projection algorithm is the natural extension of the
unconstrained steepest descent algorithm to bound constrained
problems \cite{Kelley}. To this end, we first rewrite
(\ref{eqn:MC:dual})  as a standard convex problem:
 \bea\label{eqn:MC:dual2}
    \min_{\bm\lambda\ge \qzero} && f(\bm\lambda)\triangleq\frac{\|\qA\bm\lambda \|^2}{4}
    -\sqrt{\bm\Gamma^T
    N_0}   \bm\lambda,
 \eea
 where $\bm\Gamma=[\Gamma_1,~ \cdots,~ \Gamma_K]^T$.
 It is easy to verify that the gradient of $f(\bm\lambda)$ is given
 by
 \be
  \nabla f(\bm\lambda)  =
 \frac{\qA^T\qA\bm\lambda}{2}-\sqrt{\bm\Gamma^T N_0} \qone.
 \ee
 We then propose the following Algorithm 1 to
 solve (\ref{eqn:MC:dual}).  Once the optimal dual solution $\bm\lambda_{\sf opt}$ is found, the
 original precoding solution can be obtained from  (\ref{eqn:opt:w2}).

 \begin{algorithm}[]
\caption{Efficient Gradient Projection Algorithm to solve
(\ref{eqn:MC:dual2}) }\label{alg:ESB}
\begin{algorithmic}[1]
\STATE {\bf Input:} $ {\bf h},\qd, \Gamma,N_0$

\STATE {\bf begin}

\STATE \hspace*{3mm} Initialize arbitrarily
$\boldsymbol{\lambda}^{(0)}\ge{\bf 0}$.

 \STATE \hspace*{3mm}  In the $n-$th iteration, update $\bm\lambda$:
\be \small \boldsymbol{\lambda}^{(n)} = \max\left( \bm\lambda^{(n-1)} - a_n\left(\frac{\qA^T\qA\bm\lambda^{(n-1)}}{2}-\sqrt{\bm\Gamma^T N_0}\qone \right),\qzero\right), \color{black} \ee
\hspace*{3mm} where the step size $a_n$ can be chosen according to
the Armijo rule or some other line search scheme.

\STATE \hspace*{3mm} Go back to line 4 until convergence.

\STATE {\bf end}

\STATE {\bf Output:} $\bm\lambda_{\sf opt}$.
\end{algorithmic}
\end{algorithm}

\section{Constructive Interference Optimization for SINR Balancing}

The above concept of constructive interference can be applied to the SINR balancing problem of \eqref{eqn:SB2}. The problem for the case of constructive interference can be reformulated as
 \bea\label{eqn:CISB}
    \max_{\{\qt_i, \Gamma_t\}} && \Gamma_t \nonumber\\
    \mbox{s.t.}
    &&\left|\angle \left({\qh_i^T} \sum_{k=1}^K \qt_k  d_k\right)-\angle(d_i)\right|\leq\vartheta, \forall i\nonumber\\
     &&\mbox{Re}\left({\qh_i^T} \sum_{k=1}^K \qt_k  e^{j(\phi_k-\phi_i)}\right)\ge
     \sqrt{\Gamma_tN_0}, \forall i\nonumber\\
    && \left\| \sum_{k=1}^K \qt_k e^{j(\phi_k-\phi_1)}\right\|^2  \leq P
 \eea
{which is not a convex problem because
$\sqrt{\Gamma_t}$ is concave.
 However, this can be simply resolved by replacing it with a new
 variable, i.e., $\Gamma_{t2} = \sqrt{\Gamma_t}$. Then we obtain an
 equivalent problem of \eqref{eqn:CISB2}}
 
 \begin{table*}\normalsize
 \bea\label{eqn:CISB2}
    \max_{\{\qt_i, \Gamma_{t2}\}} && \Gamma_{t2} \nonumber\\
    \mbox{s.t.}
    &&\left|\mbox{Im} \left({\qh_i^T} \sum_{k=1}^K \qt_k  e^{j(\phi_k-\phi_i)}\right)\right|\le\left(\mbox{Re} \left({\qh_i^T}
    \sum_{k=1}^K \qt_k  e^{j(\phi_k-\phi_i)}\right)- \Gamma_{t2} \sqrt{N_0}\right)\tan\theta, \forall i\nonumber\\
    && \left\| \sum_{k=1}^K \qt_k e^{j(\phi_k-\phi_1)}\right\|^2  \leq P
 \eea
 \end{table*}

 It can be seen that, as opposed to
 the conventional SINR balancing problem \eqref{eqn:SB2}, which is non-convex,
 the formulation in \eqref{eqn:CISB2} is convex and can be solved by standard convex optimization techniques.

\subsection{ Multicast Formulation of \eqref{eqn:CISB}}

 In a similar fashion to the power minimization problem, the optimization in \eqref{eqn:CISB} can be re-cast as a multicasting problem according to the following theorem.

\begin{theorem}
 Problem (\ref{eqn:CISB2}) is equivalent to the multicast problem
 below:
 \bea\label{eqn:MC3}
    \max_{\{\qw, \Gamma_{t2}\}} && \Gamma_{t2} \nonumber\\
&&\left|\mbox{Im} \left({\tilde\qh_i^T}   \qw \right) \right| \le \left(\mbox{Re} \left({\tilde\qh_i^T} \qw  \right) - \Gamma_{t2}\sqrt{N_0} \right)\tan \theta, \forall i,   \nonumber\\
    &&\|\qw\|^2 \leq P
\eea

where $\tilde \qh_i = \qh_i e^{j(\phi_1-\phi_i)}$. To be more
specific, if the optimal solutions to (\ref{eqn:CISB}) and
(\ref{eqn:MC}) are $\{\qt^*_i\}$ and $\qw^*$, respectively, then
they have the following relation:
\bea
  \qt_1^* &= &\frac{\qw^*}{K}, \\
  \qt_k &=& \qt_1^*e^{j(\phi_1-\phi_k)} =\frac{\qw^*e^{j(\phi_1-\phi_k)} }{K}, k = 2,
  \cdots, K.
\eea
\end{theorem}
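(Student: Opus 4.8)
The plan is to mirror the argument used for Theorem \ref{T1}, the only structural difference being that the transmit power now appears as a constraint while the balanced level $\Gamma_{t2}$ is the objective. Since the text has already reduced (\ref{eqn:CISB}) to the convex form (\ref{eqn:CISB2}) (via the substitution $\Gamma_{t2}=\sqrt{\Gamma_t}$ together with the same angle-to-real/imaginary conversion used for power minimization), it suffices to relate (\ref{eqn:CISB2}) to (\ref{eqn:MC3}). First I would apply the phase identity $e^{j(\phi_k-\phi_i)}=e^{j(\phi_1-\phi_i)}e^{j(\phi_k-\phi_1)}$ to factor every received term in (\ref{eqn:CISB2}), exactly as in (\ref{CIBCrelc4}), as
\be
\qh_i^T \sum_{k=1}^K \qt_k e^{j(\phi_k-\phi_i)} = \left(\qh_i^T e^{j(\phi_1-\phi_i)}\right)\left(\sum_{k=1}^K \qt_k e^{j(\phi_k-\phi_1)}\right) = \tilde\qh_i^T \qw,
\ee
with $\tilde\qh_i \triangleq \qh_i e^{j(\phi_1-\phi_i)}$ and the composite precoder $\qw \triangleq \sum_{k=1}^K \qt_k e^{j(\phi_k-\phi_1)}$.

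The key observation is that after this substitution the whole of (\ref{eqn:CISB2}) depends on $\{\qt_i\}$ only through $\qw$: the objective $\Gamma_{t2}$ is independent of the precoders, each constraint collapses to $|\mbox{Im}(\tilde\qh_i^T\qw)| \le (\mbox{Re}(\tilde\qh_i^T\qw)-\Gamma_{t2}\sqrt{N_0})\tan\theta$, and the budget becomes $\|\qw\|^2 \le P$. This is precisely (\ref{eqn:MC3}), so the reformulation is immediate.

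To confirm that the two problems attain the same optimal value I would argue both directions. In the forward direction, any feasible $\{\qt_i\}$ of (\ref{eqn:CISB2}) induces a feasible $\qw=\sum_k\qt_k e^{j(\phi_k-\phi_1)}$ of (\ref{eqn:MC3}) with identical $\Gamma_{t2}$. In the reverse direction, given an optimal $\qw^*$, I would set $\qt_k=(\qw^*/K)e^{j(\phi_1-\phi_k)}$; then $\sum_k\qt_k e^{j(\phi_k-\phi_1)}=\qw^*$, so these precoders reproduce $\tilde\qh_i^T\qw^*$ in every constraint and satisfy $\|\sum_k\qt_k e^{j(\phi_k-\phi_1)}\|^2=\|\qw^*\|^2\le P$, hence are feasible for (\ref{eqn:CISB2}) at the same $\Gamma_{t2}$. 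Equating the two optimal values yields the stated relation $\qt_1^*=\qw^*/K$ and $\qt_k=\qt_1^*e^{j(\phi_1-\phi_k)}$.

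The one place where this argument is not a verbatim copy of Theorem \ref{T1} — and the step I would check most carefully — is the migration of the power term from objective to constraint. I must verify that the reverse mapping reproduces the budget with no slack, i.e. that collapsing the $K$ per-user precoders into the single multicast vector never sacrifices an achievable balanced SINR. Because the total power is invariant under the rotation $\qt_k\mapsto\qt_1^*e^{j(\phi_1-\phi_k)}$ and equals exactly $\|\qw^*\|^2$, the constraint $\|\qw\|^2\le P$ is tight with the original budget, and the equivalence holds with no loss.
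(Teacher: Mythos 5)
Your proposal is correct and follows essentially the same route as the paper, whose proof of this theorem is simply the remark that it ``follows the one for Theorem \ref{T1}'': you use the identical composite-precoder substitution $\qw=\sum_{k}\qt_k e^{j(\phi_k-\phi_1)}$ and the same rank-one rotation $\qt_k=(\qw^*/K)e^{j(\phi_1-\phi_k)}$ for recovering the per-user precoders. Your explicit two-direction feasibility argument and the check that the power budget maps over without slack are exactly the details the paper leaves implicit, so nothing is missing and nothing diverges.
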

\begin{proof} The proof follows the one for Theorem \ref{T1}
\end{proof}
 The above is a standard SOCP problem which can be efficiently solved using known approaches.

 \section{Robust Power Minimization with bounded CSI errors }

\subsection{Channel Error Model and Problem Formulation}
 In this section, we study the robust procoding design
when the CSI is imperfectly known. We model user $i$'s actual
channel as
 \be
     \qh_i = \hat\qh_i +  \qe_i, \forall k,
 \ee
 where   $\hat\qh_i$ denotes the CSI estimates known to the BS, and
 $\qe_i$ represents the CSI uncertainty within the spherical set $\mathcal{U}_i=\{\qe_i: \|\qe_i\|^2\le \delta_i^2\}.$

 We assume that the BS has no knowledge about $\qe_i$ except for its error bound $\delta_i^2$ thus we take  a worst-case approach for the transmit precoding design to guarantee the resulting
 solution is  robust to all possible channel uncertainties within $\mathcal{U}_i$. The specific robust design problem is to minimize the overall transmit
power $P_T$ for ensuring the users' individual SINR constraints by
optimizing the precoding vector $\{\qt_i\}$,  i.e.,
\begin{equation}\label{prob:miso}
\begin{aligned}
\min_{\qw} &~P_T ~~ \mbox{s.t.} &~\mbox{SINR}_i\ge\Gamma_i~, \forall
\qe_i\in\mathcal{U}_i, \forall i.
\end{aligned}
\end{equation}

\subsection{Conventional Robust Precoding }
 In the conventional multiuser MISO systems, the total transmit
 power is $P_T=\sum_{i=1}^K\|{\bf t}_i\|^2$ and robust design
 problem can be expressed as
\bea \label{prob:miso2}
\min_{\{\qt_i\}} &&\sum_{i=1}^K\|{\bf t}_i\|^2 \nonumber \\
 \mbox{s.t.}&&
\frac{\left|{
\qh}^T_i{\qt}_i\right|^2}{\displaystyle\sum_{k=1\atop k\ne
i}^{K}\left|  \qh_i^T{\bf t}_k\right|^2+N_0} \ge\Gamma_i, \forall
\qe_i\in\mathcal{U}_i, \forall k.
\eea

The robust precoding design is characterized by the following
theorem \cite{GanRobust}.
\begin{theorem}\label{worst2sdp}
The  robust beamforming problem (\ref{prob:miso2}) can be relaxed to
the following semi-definite programming (SDP) problem
\begin{equation}\label{eqn:sdp:miso}
\begin{aligned}
&~~\min_{\left\{{\bf T}_i\succeq\qzero, s_i\ge 0\right\}}  \sum_{k=1}^K{\rm trace}(\qT_i)\\
&~~ \mbox{\rm s.t.} \left[\begin{array}{cc}
\hat{\bf h}_i^*\qQ_i\hat{\bf h}_i^T-\gamma_iN_0-s_i \delta_i^2  & \hat{\bf h}_i^*\qQ_i\\
\qQ_i\hat{\bf h}_i^T & \qQ_i+\delta_i^2 \qI \\
\end{array}\right]
\succeq \qzero\quad\forall k,
\end{aligned}
\end{equation}
where
\begin{equation*}
\qQ_i\triangleq\qT_i-\Gamma_i\sum_{k=1\atop k\ne i}^M\qT_n~~\forall
k.
\end{equation*}
The problem (\ref{eqn:sdp:miso}) is convex and hence can be
optimally solved. The resulting objective value of
(\ref{eqn:sdp:miso}) provides a lower bound for the conventional
power minimization.
\end{theorem}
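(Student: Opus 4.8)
The plan is to lift the problem to matrix variables via the standard semidefinite relaxation, and then dispatch the worst-case (semi-infinite) SINR constraint with the S-procedure. First I would introduce $\qT_i \triangleq \qt_i\qt_i^\dagger \succeq \qzero$, so that the objective becomes $\sum_i \tr(\qT_i)$ and each cross term satisfies $|\qh_i^T\qt_k|^2 = \qh_i^T\qT_k\qh_i^*$. Cross-multiplying the SINR constraint (the denominator is positive) and collecting the matrices yields, for each $i$, the single quadratic-form inequality $\qh_i^T\qQ_i\qh_i^* \ge \Gamma_i N_0$ with $\qQ_i \triangleq \qT_i - \Gamma_i\sum_{k\ne i}\qT_k$, exactly as defined in the statement. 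Dropping the rank-one restriction $\mathrm{rank}(\qT_i)=1$ enlarges the feasible set, so the resulting optimum can only decrease; this immediately yields the final sentence, namely that \eqref{eqn:sdp:miso} is a \emph{relaxation} whose optimal value lower-bounds the original power minimum.

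Next I would incorporate the channel uncertainty. Substituting $\qh_i = \hat\qh_i + \qe_i$ turns the per-user condition into the \emph{semi-infinite} requirement
\begin{equation*}
\qe_i^T\qQ_i\qe_i^* + 2\,\mathrm{Re}\{\hat\qh_i^T\qQ_i\qe_i^*\} + \hat\qh_i^T\qQ_i\hat\qh_i^* - \Gamma_i N_0 \ge 0 \quad \text{for all } \|\qe_i\|^2\le\delta_i^2 .
\end{equation*}
Since $\qQ_i$ is Hermitian (a real linear combination of Hermitian $\qT_k$), the left-hand side is a genuine, generally indefinite, Hermitian quadratic form in $\qe_i$, and the admissible set $\{\qe_i : \delta_i^2 - \|\qe_i\|^2 \ge 0\}$ is a single quadratic (ball) constraint, which is precisely the setting the S-procedure is built for.

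The key step is the S-procedure (S-lemma): the implication ``$\delta_i^2 - \|\qe_i\|^2 \ge 0 \Rightarrow g_i(\qe_i)\ge 0$'' holds if and only if there exists a multiplier $s_i \ge 0$ with $g_i(\qe_i) - s_i(\delta_i^2 - \|\qe_i\|^2) \ge 0$ for all $\qe_i$. Because $\qe_i = \qzero$ is a strictly interior point of the ball ($\delta_i>0$), Slater's condition is met and the complex S-lemma is lossless, so this equivalence is exact. Writing the resulting globally nonnegative quadratic form in homogenized coordinates $[\qe_i^*;\,1]$ as a Gram-type matrix and demanding that it be positive semidefinite reproduces the block LMI in \eqref{eqn:sdp:miso}: the quadratic part collects the $\qe_i^T(\qQ_i+s_i\qI)\qe_i^*$ term, the off-diagonal blocks carry the cross term $\qQ_i\hat\qh_i^*$, and the scalar corner is $\hat\qh_i^T\qQ_i\hat\qh_i^* - \Gamma_i N_0 - s_i\delta_i^2$ (up to an innocuous permutation of the two diagonal blocks and the usual conjugation convention in writing the real quadratic form $\hat\qh_i^*\qQ_i\hat\qh_i^T$). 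Adjoining $s_i\ge 0$ and $\qT_i\succeq\qzero$ as variables gives the stated problem, which is convex since the objective is linear and all constraints are LMIs.

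The main obstacle, and the only place requiring genuine care, is the \emph{lossless} application of the S-procedure for a \emph{complex} quadratic form with a \emph{single} constraint; I would invoke the complex-valued S-lemma and check its strict-feasibility hypothesis, which is automatic here because $\delta_i>0$. A secondary point worth flagging explicitly is that the construction is only a relaxation: recovering a rank-one optimizer $\qT_i=\qt_i\qt_i^\dagger$ is not guaranteed in general, which is exactly why the theorem asserts a lower bound rather than full equivalence.
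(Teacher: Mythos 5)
Your proposal is correct, but note that there is essentially no in-paper proof to compare it against: Theorem~\ref{worst2sdp} is imported from \cite{GanRobust}, and the only accompanying discussion is the remark on rank-one tightness. Your reconstruction --- lifting to $\qT_i=\qt_i\qt_i^\dag$ so that $|\qh_i^T\qt_k|^2=\qh_i^T\qT_k\qh_i^*$, observing that dropping the rank-one constraint enlarges the feasible set (hence the lower-bound claim), and converting the semi-infinite worst-case constraint into an LMI via the complex S-lemma, whose losslessness is guaranteed by the Slater point $\qe_i=\qzero$ when $\delta_i>0$ --- is exactly the standard argument behind this class of results, and each step you give is sound.

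One point deserves stronger emphasis than your parenthetical about an ``innocuous permutation and conjugation convention'': your derived LMI has $\qQ_i+s_i\qI$ in the quadratic block, whereas the paper prints $\qQ_i+\delta_i^2\qI$. These are not equivalent. The S-procedure multiplier must appear both in the quadratic block (as $+s_i\qI$) and in the scalar corner (as $-s_i\delta_i^2$); with the paper's block taken literally, the multiplier enters only through the corner, so feasibility of that LMI for some $s_i\ge 0$ reduces to feasibility at $s_i=0$, which is a genuinely different condition that is not implied by feasibility of the original robust problem --- the ``lower bound'' assertion would then fail. In other words, your version is the correct one and the paper's block is a typo, consistent with the other slips in the printed statement ($\gamma_i$ for $\Gamma_i$, the summation limit $M$ for $K$, $\qT_n$ for $\qT_k$, and ``$\forall k$'' where ``$\forall i$'' is meant, plus the dimensionally inconsistent ordering $\hat\qh_i^*\qQ_i\hat\qh_i^T$). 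Flagging this discrepancy explicitly, rather than absorbing it into a remark about conventions, is the one improvement your write-up needs.
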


{\it Remark: When the SDP relaxation is tight, or
(\ref{eqn:sdp:miso}) returns all rank-1 solution $\{\qT_i\}$, then
the optimal solution $\{\qt_i\}$ to solve (\ref{prob:miso2}) can be
obtained by matrix decomposition such that $\qT_i =\qt_i\qt_i^\dag,
\forall i$; otherwise, the required power in (\ref{prob:miso2}) is
always higher than that in (\ref{eqn:sdp:miso}).

}

\subsection{Robust Precoding based on Constructive Interference}

Based on the multicast formulation of the power minimization problem(\ref{eqn:MC}), the worst-case robust design for the case
of constructive interference becomes
  \bea\label{eqn:CIBC:robust}
    \min_{\qw} &&  \|\qw\|^2 \nonumber\\
    \mbox{s.t.}
    && \left|\mbox{Im}  \left({\tilde\qh_i^T} \qw \right)\right|- \left( \mbox{Re} \left({\tilde\qh_i^T} \qw\right)  -   \sqrt{\Gamma_i N_0}\right)\tan
    \theta   \le 0, \notag\\
    &&\forall  \|\qe_i \|^2\le\delta_i^2, \forall i.\label{const:angle}
 \eea

 The constraint in (\ref{eqn:CIBC:robust}) is intractable due to the infinite number of error vectors.
  Below we show how to tackle it using convex optimization
  techniques. For ease of composition, we omit the user index.

    First notice that  robust constraint  (\ref{const:angle}) can be guaranteed by the modified constraint below:
  \be
    \max_{\|\qe\|^2\le
    \delta^2}\left( \left|\mbox{Im} \left({\tilde\qh^T} \qw \right)\right|- \left( \mbox{Re} \left({\tilde\qh^T} \qw\right)  -   \sqrt{\Gamma N_0}\right)\tan
    \theta\right) \le 0. \label{const:angle:no:index}
  \ee
   We again separate the real and imaginary parts of each complex notation as follows
 \bea\label{eqn:real:imag2}
    \tilde \qh &=& \tilde\qh_R + j\tilde\qh_I\nonumber\\
    &=& \hat \qh_R + j\hat\qh_I + \qe_R +
    j\qe_I.
 \eea

 Further define real-valued error vector and channel estimation
 vector
 \be\label{eqn:real:vector2}\bar\qe \triangleq [\qe_R; \qe_T],   \hat\qf = [\hat\qh_R;
 \hat\qh_I].\ee
 Apparently $\|\bar\qe\|^2\le  \delta^2$.
 With the new notations, we can re-express the real and imaginary
 parts in (\ref{const:angle:no:index}) as follows
 \bea
    \mbox{Im} ({\tilde\qh^T} \qw) &=&\hat \qh_R^T \qw_I + \hat \qh_I \qw_R +
    \qe_R^T \qw_I + \qe_I^T\qw_R,\nonumber\\
    &= &\hat\qf^T \qw_1  + \bar\qe^T \qw_1;\\
    \mbox{Re} ({\tilde\qh^T} \qw)& =&\hat \qh_R^T \qw_R - \hat \qh_I \qw_I +
    \qe_R^T \qw_R -\qe_I^T\qw_I,\nonumber\\
    &= &\hat\qf^T \qw_2  + \bar\qe^T \qw_2.
 \eea

 As a consequence,  (\ref{const:angle:no:index}) can be rewritten as
 \bea\label{eqn:robust:angle:1}
   \max_{\|\bar\qe\|^2\le  \delta^2}  |\hat\qh^T \qw_1  + \bar\qe^T \qw_1| - (\hat\qh^T \qw_2  + \bar\qe^T
    \qw_2)\tan\theta \nonumber\\ 
    \quad \quad \quad \quad \quad + \sqrt{\Gamma N_0}\tan\theta \le 0.
 \eea
 The above constraint is equivalent to the following two
 constraints:
 \bea\label{eqn:robust:angle:2}
   \max_{\|\bar\qe\|^2\le  \delta^2}    \hat\qh^T \qw_1  + \bar\qe^T \qw_1  - (\hat\qh^T \qw_2  + \bar\qe^T
    \qw_2)\tan\theta \nonumber\\ 
    \quad \quad \quad \quad \quad + \sqrt{\Gamma N_0}\tan\theta \le 0,\\
       \max_{\|\bar\qe\|^2\le  \delta^2}    -\hat\qh^T \qw_1  - \bar\qe^T \qw_1  - (\hat\qh^T \qw_2  + \bar\qe^T
    \qw_2)\tan\theta \nonumber\\ 
    \quad \quad \quad \quad \quad+ \sqrt{\Gamma N_0}\tan\theta \le 0,
 \eea
  whose robust formulations are given by
 \bea
   \hat\qh^T \qw_1 -  \hat\qh^T \qw_2\tan\theta  +   \delta\|\qw_1-
    \qw_2\tan\theta\| \nonumber\\ 
    \quad \quad \quad \quad \quad + \sqrt{\Gamma N_0}\tan\theta \le 0,\label{q1}\\
            -\hat\qh^T \qw_1  -  \hat\qh^T \qw_2\tan\theta
          + \delta\|\qw_1+ \qw_2\tan\theta\| \nonumber\\ 
    \quad \quad \quad \quad \quad + \sqrt{\Gamma N_0}\tan\theta \le
          0.\label{q2}
 \eea

 Finally we reach the following robust  problem formulation
  \bea\label{eqn:CIBC:robust2}
    \min_{ \qw_1, \qw_2 } &&  \|\qw_1\|^2 \nonumber\\
    \mbox{s.t.}
     && \mbox{Constraints (\ref{q1}) and (\ref{q2})}, \forall i,\nonumber\\
    &&    \qw_1 = \Pi\qw_2.
 \eea
 Note that (\ref{eqn:CIBC:robust2}) is standard SOCP problem
 therefore can be efficiently solved. After we obtain the optimal $\qw_1^*,
 \qw_2^*$, the robust solution $\qw^*$ can be determined using the
 relation in (\ref{eqn:real:vector}).

\subsection{Robust SINR Balancing}

For completeness, we also study the robust SINR balancing given
total BS power constraint. Similar procedures can be taken as the
above to derive it, therefore we give the problem formulation
directly below
  \bea\label{eqn:CIBC:SINR:robust}
    \min_{ \qw_1, \qw_2,\Gamma_{t2}} &&  \Gamma_{t2} \nonumber\\
    \mbox{s.t.}&&
    \mbox{Constraints (\ref{q1}) and (\ref{q2})}, \forall i,\nonumber\\
    && \qw_1 = \Pi\qw_2,\nonumber\\
    && \|\qw_1\|^2\le P_T.
 \eea
The above is a typical SOCP problem that can be solved using standard approaches. Suppose the optimal $\Gamma_{t2}$ is $\Gamma^*_{t2}$, then the
maximum SINR value becomes ${\Gamma^*_{t2}}^2$.

\color{black}
\section{Numerical Results}

This section presents numerical results based on Monte Carlo
simulations of the proposed optimization techniques,  termed as CI
in the following, and conventional precoding based on optimization
for the frequency flat Rayleigh fading statistically uncorrelated
multiuser MISO channel for both cases of perfect and erroneous CSI. Systems
with BPSK, QPSK and 8PSK modulation are considered while it is clear
that the benefits of the proposed technique extend to higher order
modulation. To focus on the proposed concept, we compare the
proposed CI optimization solely to the conventional optimization  of
\cite{Ott},\cite{Boche}, while it is clear the the benefits of the
proposed concept extend to modified optimization designs in the
literature, by direct application of the constructive interference
concept and the adaptation of the relevant QoS constraints. {We use `$N\times K$'  to denote a mutiuser
MISO system with $N$ transmit antennas and $K$ single-antenna
terminals. Unless otherwise specified, QPSK  is the default
modulation scheme.}

\begin{figure}
    \centering
        \includegraphics[width=\fsize\textwidth]{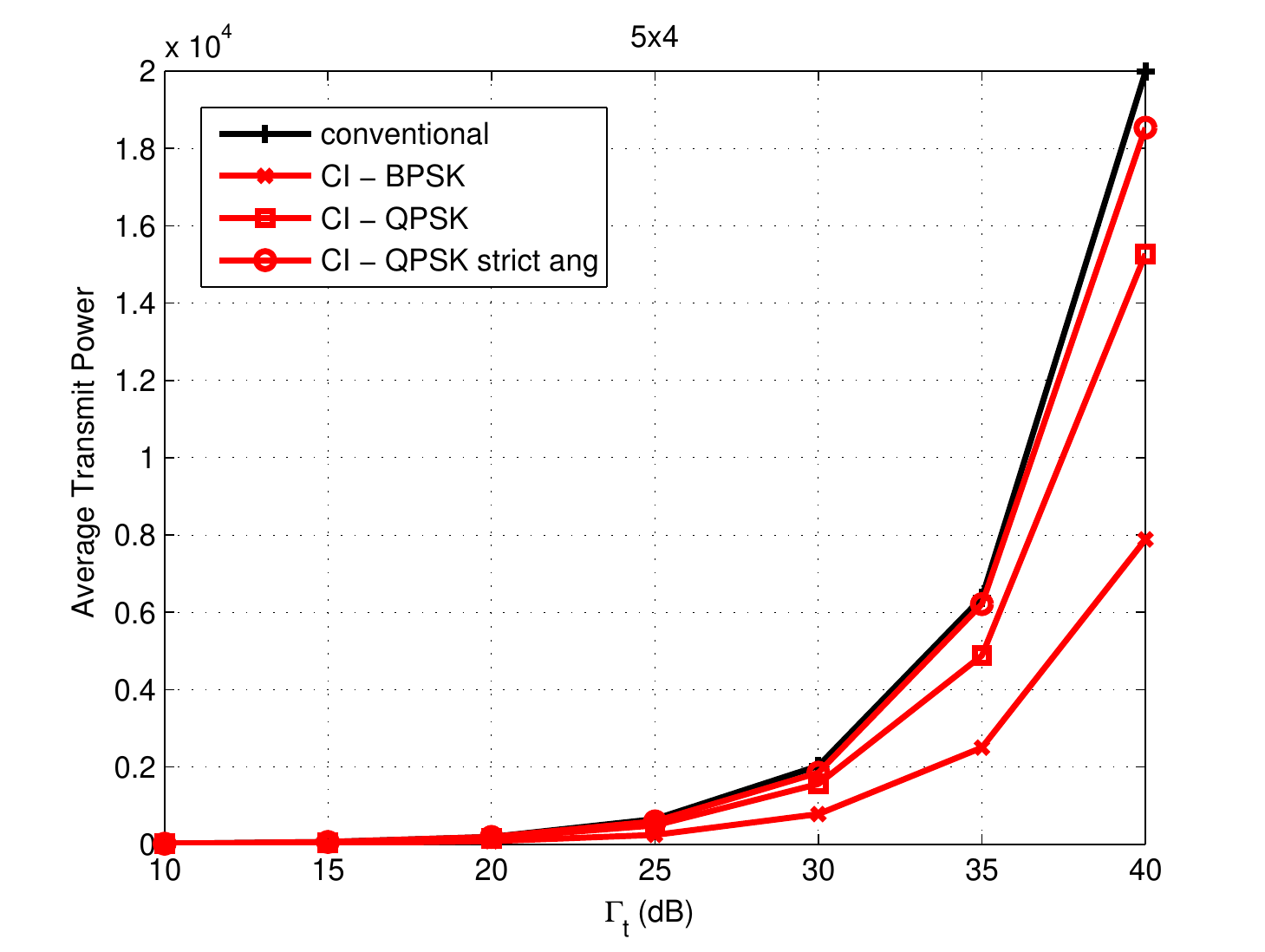}
    \caption{{Transmit power vs. $\Gamma_t$ for conventional and CI, $K$=4, $N$=5}}
    \label{5x4}
\end{figure}
 \begin{figure}
    \centering
        \includegraphics[width=\fsize\textwidth]{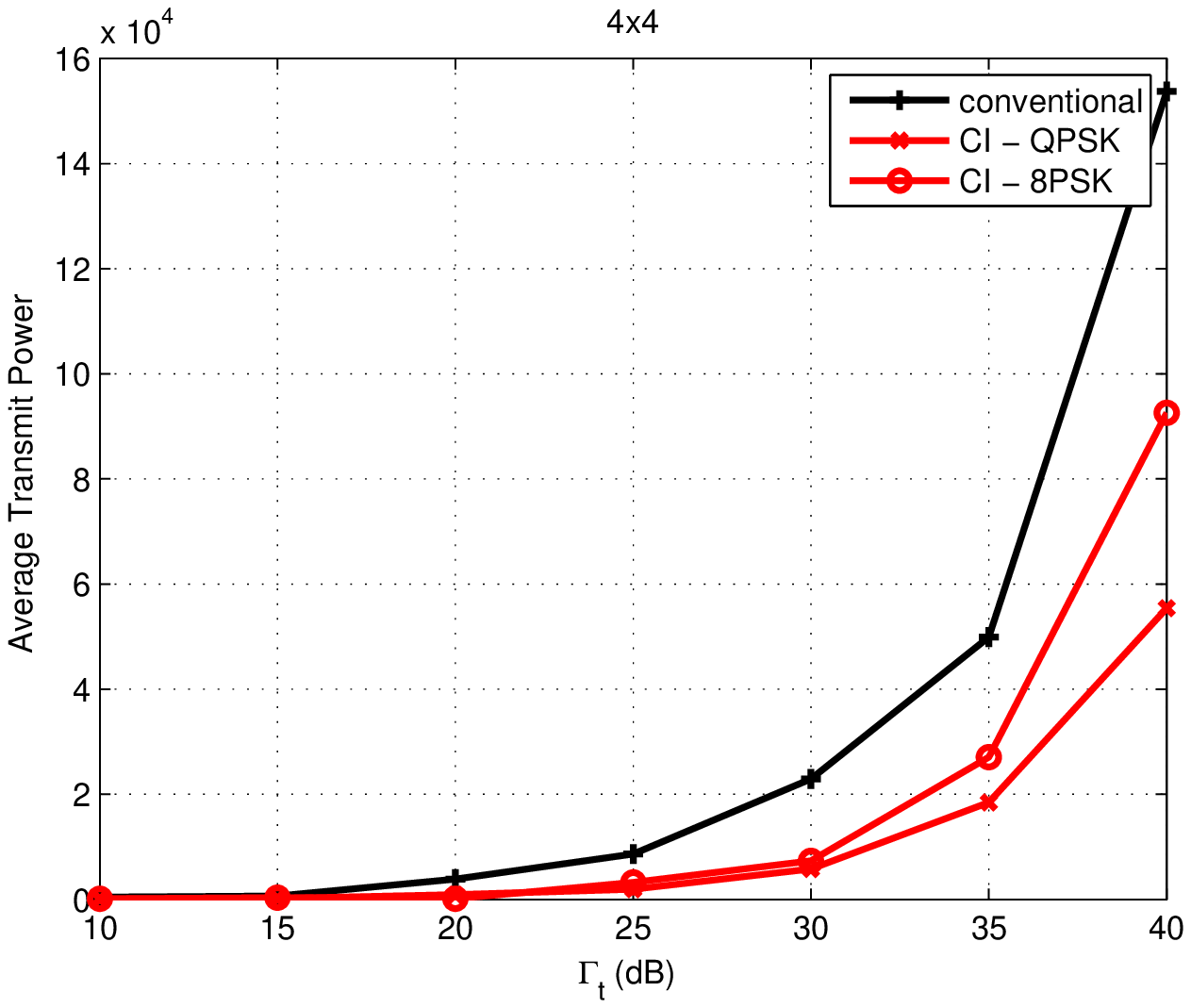}
    \caption{{Transmit power vs. $\Gamma_t$ for conventional and CI, $K$=4, $N$=4}}
    \label{4x4}
\end{figure}
 \begin{figure}
    \centering
        \includegraphics[width=\fsize\textwidth]{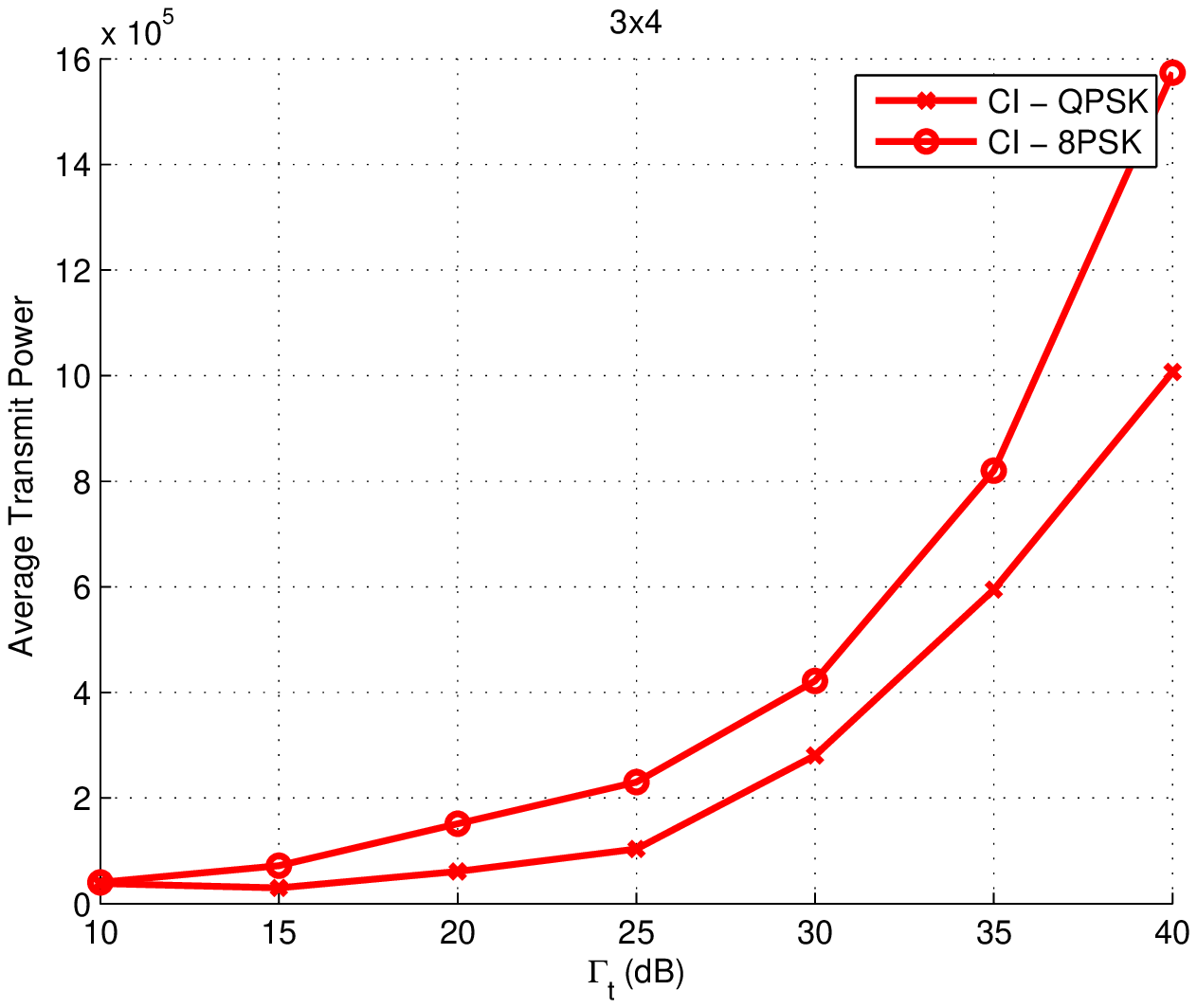}
    \caption{{Transmit power vs. $\Gamma_t$ for conventional and CI, $K$=4, $N$=3}}
    \label{3x4}
\end{figure}

 \begin{figure}
    \centering
        \includegraphics[width=\fsize\textwidth]{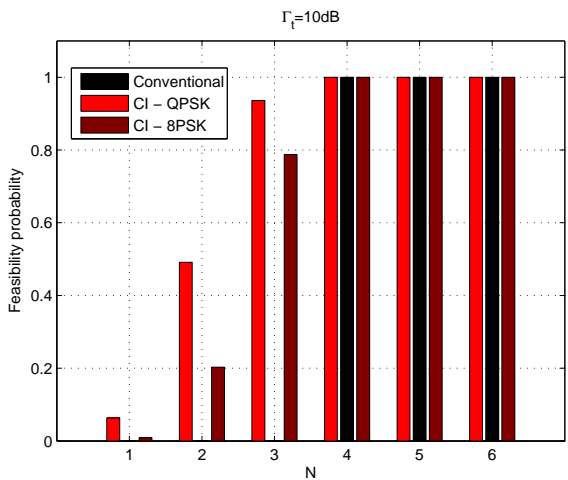}
    \caption{{Feasibility probability vs. $N$ for conventional and CI, $K$=4, $\Gamma_t$=10dB}}
    \label{nx4feas}
\end{figure}
First, in Fig. \ref{5x4} we compare the average transmit power with
the conventional optimization and the proposed optimization problems
\eqref{CIBC2} and \eqref{CIBC3} in the $5\times4$ scenario for BPSK and QPSK, respectively. Power savings
of up to 50\% can be seen. It can also be observed that the relaxed
optimization \eqref{CIBCrelc3} results in significant power gains
compared to the strict angle constraints in \eqref{CIBCstrict2}. The
same trend can be observed for the $4\times4$ systems case of Fig. \ref{4x4} where the conventional
optimization results in a solution with increased transmit power,
{because less transmit antennas are available at the
BS}. The gains in this case are amplified for the
proposed relaxed optimization. The transmit power is also shown in
Fig. \ref{3x4} for the $3\times4$ scenario where the conventional
precoder is inapplicable, as the optimization in \eqref{eqn:BC} is
infeasible for $N<K$. Remarkably, the proposed optimization problem is
feasible in the 92.6\% of the cases. The relaxed nature of the
problem indeed leads to larger feasibility regions. To illustrate
the extended feasibility region for the proposed optimization
problems, Fig. \ref{nx4feas} shows the feasibility probability of a $K=4$ user system with
respect to the number of transmit antennas. It can be seen that
while the conventional optimization is only feasible for $N\geq K$,
the proposed can be feasible for lower values of $N$.

 The complexity of the proposed power minimization problem is addressed in Fig. \ref{5xKtime} for the system with $N=5$. While analytical complexity expressions are hard to derive for optimization-based precoding, here the complexity of the solvers with the broadcast (BC) formulation of \eqref{CIBCrelc3}, multicast (MC) formulation of \eqref{eqn:MC} and gradient projection solver in Algorithm 1 is shown in terms of the average execution time of the optimization algorithm, with increasing numbers of users. While the BC and MC show a similar computational complexity, the gradient projection solver in Algorithm 1 offers a significant reduction in the involved complexity down to less than $15\%$, which further motivates the multicast formulation of the optimization problem. Still, we note that the proposed optimization needs to be performed on a symbol-by-symbol basis. Accordingly, the proposed may involve excess complexity compared to conventional precoding optimization especially for slow fading scenarios, where the convectional channel-dependent precoding may not require the optimization to be performed frequently. Accordingly, for the slow fading scenario where the channel is constant over a transmission frame and for the example of an LTE frame with 14 symbol time slots, the proposed precoding with the gradient projection solver translates to a doubling of complexity per frame ($14 \times 15\% = 210\%$) for the proposed scheme w.r.t. conventional precoding optimization. However, for the obtained transmit-power benefits as shown in our results, the end complexity increase is definitely worthwhile the performance benefits offered. In fact, in terms of the ultimate metric of power efficiency of the transmitter, we note that for an LTE base station the transmit power is measured in the order of around 20Watts, while the power consumption of the DSP processing is typically of the order of hundreds of milliWatts. Since with the proposed precoding our results show a halving of the transmit power at roughly double the DSP power, the gains in the power efficiency for the proposed scheme w.r.t. conventional precoding are therefore significant.\color{black}

 \begin{figure}[tb]
    \centering
        \includegraphics[width=\fsize\textwidth]{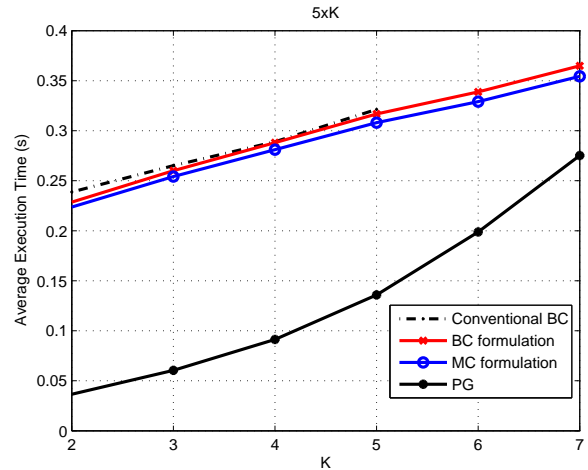}
    \caption{{Average execution time vs. $K$ for CI optimization with the BC formulation of \eqref{CIBCrelc3}, MC formulation of \eqref{eqn:MC} and gradient projection solver in Algorithm 1, $N$=5, QPSK}}
    \label{5xKtime}
\end{figure}
\begin{figure}[tb]
    \centering
        \includegraphics[width=\fsize\textwidth]{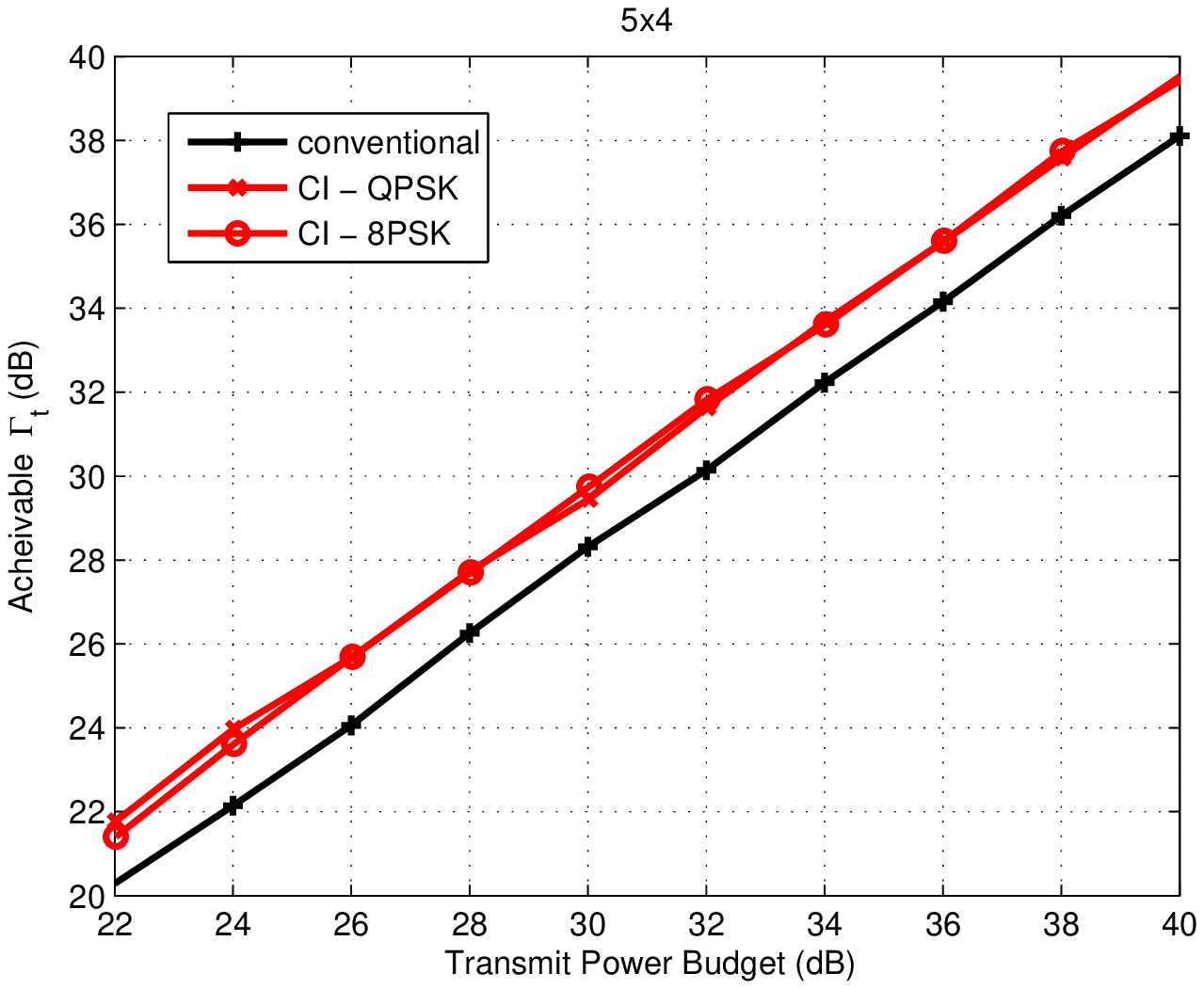}
    \caption{{Achievable $\Gamma_t$ vs. Transmit power budget for conventional and CI, $K$=4, $N$=5}}
    \label{5x4G}
\end{figure}
 \begin{figure}[tb]
    \centering
        \includegraphics[width=\fsize\textwidth]{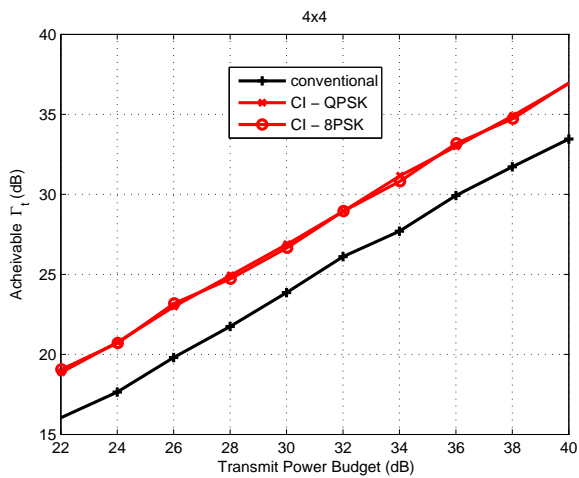}
    \caption{{Achievable $\Gamma_t$ vs. Transmit power budget for conventional and CI, $K$=4, $N$=4}}
    \label{4x4G}
\end{figure}

Figs. \ref{5x4G} and \ref{4x4G} compare the achievable SNR for the SINR balancing
problems \eqref{eqn:SB2} and \eqref{eqn:CISB}. In Fig. \ref{5x4G}
the achievable $\Gamma_t$ is shown for the $5\times4$ MISO where an
SNR gain of about 2dB can be observed. The same trend is observed in
Fig. \ref{4x4G} for the $5\times4$ MISO with an
SNR gain of about 3dB. It is worth noting that for
the case of constructive interference these SNR gains are due to the effect of constructive
interference.

Finally Figs. \ref{4x4CSIdelta} and \ref{4x4CSI} compare the
performance of the proposed CSI-robust CI precoder with the
conventional CSI-robust precoder of \cite{GanRobust} for the
$4\times4$ MISO. Fig. \ref{4x4CSIdelta} shows the obtained transmit
power with increasing CSI error bounds $\delta$ where it can be seen
that for values in the region of $\delta^2=10^{-3}$ the transmit
power increases significantly. On the contrary, the proposed
optimization shows a modest increase in transmit power for high
values of $\delta$ thanks to the relaxed optimization obtained by
exploiting constructive interference. This is also shown in Fig.
\ref{4x4CSI} where the average transmit power is shown with
increasing SNR thresholds, for $\delta^2=10^{-4}$. Again the
proposed shows a constant loss of less than 1dB compared to the
perfect CSI case, while conventional precoding shows an increased
sensitivity to the CSI errors.

\begin{figure}[tb]
\centering
\includegraphics[width=\fsb\textwidth]{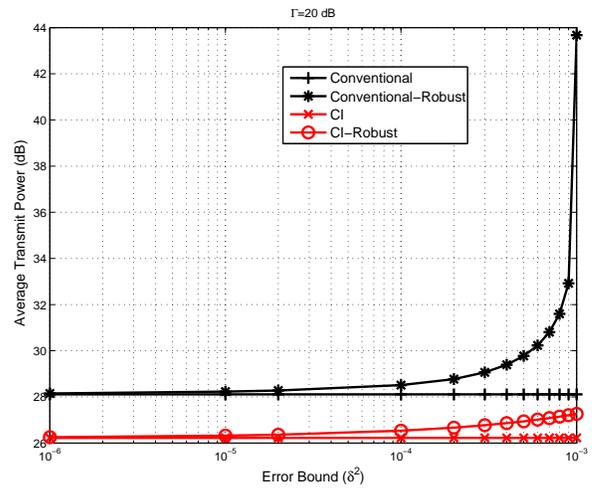}
\caption{Transmit power vs. error bound for conventional and CI, $K$=4, $N$=4,  $\Gamma=20$ dB. \color{black}}
    \label{4x4CSIdelta}
\end{figure}

\begin{figure}[tb]
\centering
\includegraphics[width=\fsb\textwidth]{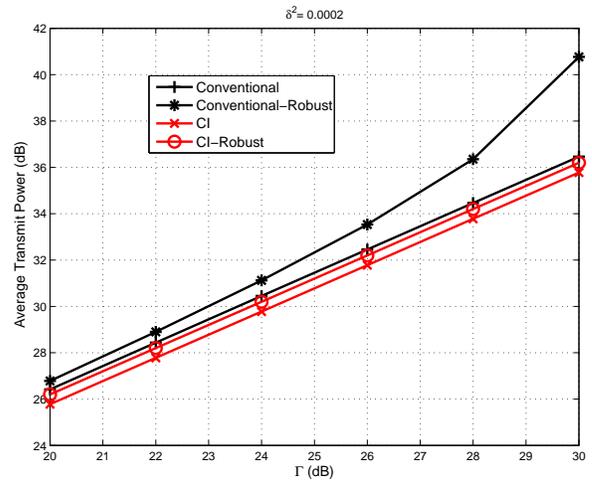}
\caption{Transmit power vs. SINR for conventional and CI, $K$=4, $N$=4, $\delta^2=10^{-4}$. \color{black}}
    \label{4x4CSI}
\end{figure}

\color{black}
\section{Conclusion}
{A number of improved optimization-based precoding designs have been proposed for the multi-user MISO downlink channel. }By taking advantage
of the constructive interference, the proposed schemes deliver
reduced transmit power for given QoS constraints, or equivalently
increased minimum SNRs for a given transmit power budget. Both optimizations were adapted to virtual multicast formulations which were shown to be more efficiently solvable. The
proposed concept was further extended to robust designs for
imperfect CSI with bounded CSI errors. In all cases the proposed
schemes provide superior performance over the
conventional precoding schemes, which confirms the  benefit of constructive interference when the precoders are fully optimized. 

The concept of constructive interference opens up new
opportunities for future work in the optimization of precoding designs. Firstly, the
extension to QAM constellation is non-trivial, and it needs
a careful redesign of constructive interference sectors for determining the relevant optimization constraints.
 Secondly, the robust design in Section VI takes a conservative
approach while the optimal robust precoder design is an important
but challenging problem to be studied. Thirdly, as this work considers a single-cell system, and as
multi-cell cooperation is made practical in systems such as
cloud-RAN \cite{cran}, it is worth investigating the potential of
the proposed scheme in multi-cell environments. Another promising
application scenario is multi-beam satellite systems where both CSI
and data for different beams are available at the gateway station
where forward link beamforming is designed \cite{multibeam}.

\begin{biography}
[{\includegraphics[width=1in,height=1.17in,clip]{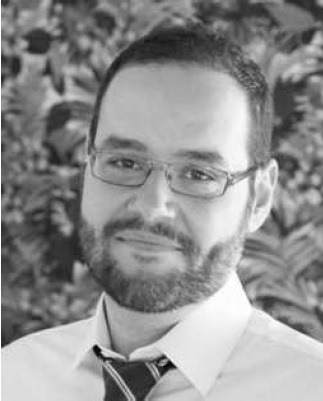}}]
{Christos Masouros} 

(M'06, SM'14), is currently a Lecturer in the Dept. of Electrical \& Electronic Eng., University College London. 
He received his Diploma in Electrical \& Computer Engineering from the University of Patras, Greece, in 2004, MSc by research and PhD in Electrical \& Electronic Engineering from the University of Manchester, UK in 2006 and 2009 respectively. He has previously held a Research Associate position in University of Manchester, UK and a Research Fellow position in Queen's University Belfast, UK. 

He holds a Royal Academy of Engineering Research Fellowship 2011-2016 and is the principal investigator of the EPSRC project EP/M014150/1 on large scale antenna systems. His research interests lie in the field of wireless communications and signal processing with particular focus on Green Communications, Large Scale Antenna Systems, Cognitive Radio, interference mitigation techniques for MIMO and multicarrier communications.\end{biography}

\begin{biography}
[{\includegraphics[width=1in,height=1.3in,clip]{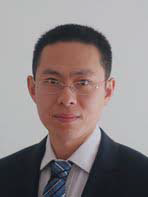}}]
{Gan Zheng} 
(S'05-M'09-SM'12) 
is currently a Lecturer in School of Computer Science
and Electronic Engineering, University of Essex, UK. 
He received the
 B. E. and the M. E. from Tianjin University, Tianjin, China, in 2002 and 2004,
respectively, both in Electronic and Information Engineering,and the PhD 
degree in Electrical 
and Electronic Engineering from The University 
of Hong Kong, Hong Kong, in 2008. Before he joined University of Essex, he worked as a Research 
Associate
at University College London, UK, and University of Luxembourg,
Luxembourg. His research interests 
include cooperative
 communications, cognitive radio, physical-layer security, full-duplex radio and energy harvesting.  He is the 
first recipient for the 2013 IEEE
Signal Processing Letters Best Paper Award.
\end{biography}

\end{document}